\newcommand{\secmath}[1]{\texorpdfstring{$ #1 $}{T}}
\newtheorem{theorem}{Theorem}[section]
\newtheorem{prop}[theorem]{Proposition}
\newtheorem{remark}[theorem]{Remark}
\DeclareMathOperator{\Aut}{Aut}
\DeclareMathOperator{\diag}{diag}
\DeclareMathOperator{\id}{Id}
\DeclareMathOperator{\GL}{GL}
\DeclareMathOperator{\SL}{SL}
\DeclareMathOperator{\PGL}{PGL}
\DeclareMathOperator{\Tr}{Tr}
\DeclareMathOperator{\sgn}{sgn}
\newcommand{\pround}[1]{\left( #1 \right)}
\newcommand{\psquare}[1]{\left[ #1 \right]}
\newcommand{\pbrace}[1]{\left\{ #1 \right\} }
\newcommand{\pangle}[1]{\left\langle #1 \right\rangle}
\newcommand{\abs}[1]{\left \lvert #1 \right \rvert}
\subjclass[2020]{}
\newcolumntype{P}[1]{>{\centering\arraybackslash}p{#1}}
\definecolor{lime}{HTML}{A6CE39}
\DeclareRobustCommand{\orcidicon}{%
	\begin{tikzpicture}
	\draw[lime, fill=lime] (0,0) 
	circle [radius=0.16] 
	node[white] {{\fontfamily{qag}\selectfont \tiny ID}};
	\draw[white, fill=white] (-0.0625,0.095) 
	circle [radius=0.007];
	\end{tikzpicture}
	\hspace{-2mm}
}
\xdef\csname orcid\x\endcsname{\noexpand\href{https://orcid.org/\csname orcidauthor\x\endcsname}{\noexpand\orcidicon}}
\keywords{Monopoles, Automorphisms, Integrable Systems, Toda, Computation}
\subjclass[2020]{14H70, 14H81, 70S15}
\title{Towards a Classification of Charge-3 Monopoles with Symmetry}
\author[H. W. Braden]{H. W. Braden \orcidA{}}
\author[Linden Disney-Hogg]{Linden Disney-Hogg \orcidB{}}
\address{
School of Mathematics and Maxwell Institute for Mathematical Sciences\\ The University of Edinburgh\\ 
Edinburgh EH9 3FD, Scotland, U.K.
}
\email{hwb@ed.ac.uk}
\email{A.L.Disney-hogg@sms.ed.ac.uk}
\thanks{\textbf{Acknowledgements.} We are grateful to Conor Houghton for correspondence and to Paul Sutcliffe for discussions and providing code (discussed below) that we have modified to plot the accompanying monopole solutions. The research of LDH is supported by a UK Engineering and Physical Sciences Research Council (EPSRC) studentship.}
\thanks{\textbf{Data Availability.} The datasets generated during the current study and the code for their creation/analysis are available from the second author upon reasonable request.}
\thanks{\textbf{Statements and Declarations.} The authors have no relevant financial or non-financial interests to disclose.}
\thanks{EMPG-23-06}
\begin{document}

\dedicatory{Dedicated to Nigel Hitchin: for the 40th anniversary of his construction of monopoles \\ - and much more.}

\maketitle

\begin{abstract}
We classify all possible charge-3 monopole spectral curves with non-trivial automorphism group and within these identify those with elliptic quotients. By focussing on elliptic quotients  the transcendental 
constraints for a monopole spectral curve become ones regarding periods of elliptic functions.
We construct the Nahm data and new monopole spectral curves with $D_6$ and $V_4$ symmetry, the latter based on an integrable complexification of Euler's equations, and for which energy density isosurfaces are plotted. Extensions of our approach to higher charge and hyperbolic monopoles are discussed.

\end{abstract}

\section{Introduction}

In this letter we shall classify and construct new charge-$3$ $\operatorname{SU}(2)$ monopoles; throughout we will work just with this gauge group. To describe this work further we first recall some necessary background.
Following the success of the ADHM construction of instanton solutions to the self-dual Yang-Mills equations Nahm introduced his eponymous equations to solving the reduction of the self-dual Yang-Mills equations $F = \ast D\phi$ which describes monopoles; alternately these reduced equations arise as the BPS limit of a Yang-Mills-Higgs system.
Here a (Euclidean BPS) monopole is the data $(A, \phi)$ where $A$ is a connection of a $\operatorname{SU}(2)$ principal bundle over $\mathbb{R}^3$ with associated curvature $F$, and $\phi$ (the Higgs field) is a section of the adjoint bundle. These satisfy the equation $F = \ast D\phi$ as well as the boundary conditions that as $r \to \infty$
(i)
 $\abs{\phi} = 1 - \frac{k}{2r} + \mathcal{O}(r^{-2})$,
where $\abs{\phi}^2=-\frac12 \Tr(\phi^2)$ is the norm coming from the Killing form on $\mathfrak{su}(2)$ and $r$ is the Euclidean distance from origin; (ii)
 $\frac{\partial\abs{\phi}}{\partial\Omega} = \mathcal{O}(r^{-2})$ and $\Omega$ is a solid angle;
(iii)   $\abs{D\phi} = \mathcal{O}(r^{-2})$.
The (positive) integer $k$ fixes the topological class of the data and is called the charge. Nahm's modification of the ADHM
construction then sought a triplet of $k\times k$ matrices 
of a real parameter $s\in [0,2]$ such that these satisfied (i) the Nahm equations
\begin{equation}\label{nahmseqns}
 \frac{dT_i}{ds} = \frac{1}{2} \sum_{j,k=1}^3 \epsilon_{ijk} [T_j,T_k],  
\end{equation}
together with (ii) the $T_i(s)$ are regular for all $s \in (0,2)$
and with simple poles at $s=0,2$ if $k>1$, whose residues form an irreducible $k$-dimensional representation of $\operatorname{SU}(2)$;  and (iii) $T_i(s) = -T_i^\dagger(s), \, T_i(s) = T^T_i(2-s)$.
Hitchin in his seminal work \cite{Hitchin1983} introduced yet a third description of monopoles, the \emph{spectral curve}, and proved the equivalence of all three descriptions. Here the
spectral curve $\mathcal{C} \subset T\mathbb{P}^1 \overset{\pi}{\to}\mathbb{P}^1 $ is a compact algebraic curve with no multiple components of genus $(k-1)^2$ such that (i) $\mathcal{C}$ is real with respect to a (to be given) anti-holomorphic involution $\tau$; (ii) there is a family of line bundles
$\mathcal{L}\sp{s}$ on $\mathcal{C}$ such that $\mathcal{L}^2$ is trivial and $\mathcal{L}(k-1):=\mathcal{L}\otimes \pi\sp\ast\mathcal{O}_{\mathbb{P}^1}(k-1)$ is real; and (iii)
$H^0(\mathcal{C}, \mathcal{L}^s(k-2))=0$ for all $s \in (0,2)$. 
Introducing coordinates $\zeta, \eta$ on $T\mathbb{P}^1$ corresponding to the base $\mathbb{P}^1$ and fibre respectively
then $\mathcal{L}\sp{s} \to T\mathbb{P}^1$ (and by restriction, to $\mathcal{C}$) is the (holomorphic) line bundle defined by $\exp(s\eta/\zeta)$ with $\eta/\zeta \in H^1(T\mathbb{P}^1, \mathcal{O})$.
 Note that $T\mathbb{P}^1$ arises here because it is the mini-twistor space of oriented geodesics in Euclidean 3-space
 \cite{Hitchin1982}. The action of $\tau$ is $(\zeta, \eta) \mapsto (-1/\bar{\zeta}, -\bar{\eta}/\bar{\zeta}^2)$, and so a generic spectral curve satisfying the reality constraints may be written as the zero set of a polynomial 
\[
0 = P(\zeta, \eta) := \eta^k + \sum_{r=1}^{k} p_{2r}(\zeta) \eta^{k-r}, \quad  p_{2r}(\zeta) = (-1)^r \zeta^{2r} \overline{p_{2r}(-1/\bar{\zeta})}, 
\]
where $p_{2r}$ is a polynomial of degree $2r$ in $\zeta$. In what follows we shall refer to \emph{Nahm data} as matrices
$\{T_i\}$ satisfying Nahm's three constraints and a \emph{monopole spectral curve} as a curve $\mathcal{C}$  satisfying Hitchin's three constraints.

Our understanding of monopoles, the self-duality equations
and Nahm's equation have developed greatly in the 40 years since
\cite{Hitchin1983}. The moduli space of monopoles of a given charge has attracted much attention, and a rational map description \cite{Donaldson1984} allows different insights and facilitates numerical solutions; integrable systems techniques have also been brought to the fore \cite{Ercolani1989, Braden2018}. Yet despite this progress very few monopole spectral curves have been found in the intervening period owing to the transcendentality of the Hitchin conditions (see \cite{Braden2021}). While monopoles of charge 1 and 2 are well-understood (for a review, see \cite{Braden2021a}) little progress has been made for higher charges. In all cases known the use of symmetry to simplify the conditions has been required; in nearly all of these we may quotient by the group of symmetries to an elliptic curve. Motivated by this history our first result here is to classify all possible charge-3 monopole spectral curves by their automorphism groups and within these identify those with elliptic quotients. From these we will then construct new monopole spectral curves with $D_6$ and $V_4$ symmetry. In section \S\ref{sec: classifying curves by automorphism group} we prove:

\begin{theorem}\label{charge3auto}
    Let $\mathcal{C} \subset T\mathbb{P}^1$ be a charge-3 monopole  spectral curve with $H \leq \Aut(\mathcal{C})$ such that the quotient genus $g(\mathcal{C}/H) = 1$. Then, up to an automorphism of $T\mathbb{P}^1$, the curve is given by the vanishing of one of the following 5 forms:
    \begin{tasks}[style=enumerate](1)
        \task $\eta^3 + \eta[(a+ib) \zeta^4 + c \zeta^2 + (a-ib)] + [(d+ie)\zeta^6 + (f+ig) \zeta^4 - (f-ig) \zeta^2  - (d-ie)]$,
        \task $\eta^3 + \eta[a(\zeta^4 + 1) + b\zeta^2] + ic\zeta(\zeta^4-1)$,
        \task $\eta^3 + a \eta \zeta^2 + ib\zeta(\zeta^4 - 1)$, 
        \task $\eta^3 + a\eta \zeta^2 + b(\zeta^6 - 1)$, 
        \task $\eta^3 + ia\zeta(\zeta^4 - 1)$, 
    \end{tasks}
    where $a, b, c, d, e, f, g \in \mathbb{R}$. 
\end{theorem}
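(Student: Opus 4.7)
The plan is to normalise $\mathcal{C}$ using $\Aut(T\mathbb{P}^1)$, analyse the structure of $H$ through its action on the base $\mathbb{P}^1$, and then enumerate the remaining possibilities using Riemann--Hurwitz together with the invariant theory of Kleinian groups. The generic charge-3 spectral curve is cut out by
\[
P(\zeta,\eta) = \eta^3 + p_2(\zeta)\,\eta^2 + p_4(\zeta)\,\eta + p_6(\zeta), \qquad \deg p_{2r} \le 2r,
\]
and the fibrewise translation $\eta \mapsto \eta - \tfrac{1}{3}p_2(\zeta)$ is an automorphism of $T\mathbb{P}^1$ (since $\deg p_2 \le 2$), so I may assume $p_2 \equiv 0$. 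As $\mathcal{C}$ is a 3-sheeted cover of $\mathbb{P}^1$ via $\pi$, every $h \in H$ arising from $\Aut(T\mathbb{P}^1)$ fits in a short exact sequence
\[
1 \longrightarrow N \longrightarrow H \longrightarrow \bar{H} \longrightarrow 1,
\]
where $\bar H \le \PGL(2,\mathbb{C})$ acts on the base $\mathbb{P}^1$ and $N$ acts fibrewise. A fibrewise automorphism of the depressed cubic $\eta^3 + p_4\eta + p_6 = 0$ takes the form $\eta \mapsto \lambda\eta$ with $\lambda^3 = 1$, and invariance of the curve forces either $\lambda = 1$ or $p_4 \equiv 0$; hence $N \in \{1, C_3\}$, with $N = C_3$ precisely when $p_4 \equiv 0$.

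Next, Riemann--Hurwitz applied to $\mathcal{C} \to \mathcal{C}/H$ combined with $g(\mathcal{C}) = (k-1)^2 = 4$ and $g(\mathcal{C}/H) = 1$ gives
\[
\sum_{P \in \mathcal{C}} (e_P - 1) = 2g(\mathcal{C}) - 2 = 6.
\]
I run through the Kleinian list $\bar H \in \{C_n, D_n, A_4, S_4, A_5\}$ and, for each candidate, write down the most general $\bar H$-semi-invariant $p_4, p_6$ of the required degrees using classical binary invariants: powers of $\zeta$ for cyclic actions; symmetrisations $\zeta^n \pm 1$ for dihedral actions; and the Schwarz/tetrahedral invariants $\zeta(\zeta^4 - 1)$, $\zeta^4 \pm 2\sqrt{3}\,i\,\zeta^2 - 1$ for $A_4$. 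When $N = C_3$ only $p_6$ survives as a non-trivial invariant. Feeding the fixed-point structure of $\bar H$ on $\mathbb{P}^1$ through the 3-to-1 cover $\pi$ yields the ramification count, and the constraint $\sum(e_P-1) = 6$ eliminates the octahedral and icosahedral possibilities, restricts the admissible dihedral cases to small $n$, and retains the tetrahedral case. Finally, imposing the reality constraint $p_{2r}(\zeta) = (-1)^r \zeta^{2r}\overline{p_{2r}(-1/\bar\zeta)}$ collapses the complex invariants into the seven real parameters $a, b, c, d, e, f, g$ displayed in forms (1)--(5).

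The principal technical obstacle is the completeness of the case analysis. One must (i) track each $\bar H \le \PGL(2,\mathbb{C})$ through its possible lifts to $H$ via the extension by $N$, while quotienting by the residual $\Aut(T\mathbb{P}^1)$-action to avoid over-counting; (ii) verify that the larger Kleinian groups admit no compatible cubic of the required bidegree, in particular ruling out $S_4$ and $A_5$; and (iii) guard against duplicated cases across the two branches $N = C_3$ (forms (4) and (5)) and $N = 1$ (forms (1)--(3)), in particular where a $C_3$-lift on the base can masquerade as an extra dihedral symmetry in the fibres. Once these are in place, matching each residual invariant curve to one of the displayed forms reduces to an explicit change of coordinates on $T\mathbb{P}^1$.
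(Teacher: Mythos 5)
Your strategy is genuinely different from the paper's. The paper does not classify from scratch: it embeds $T\mathbb{P}^1$ in $\mathbb{P}^{1,1,2}$ and then (via the Veronese map) realises the spectral curve as the intersection of a quadric cone with a cubic in $\mathbb{P}^3$, so that it can invoke Wiman's 1895 classification of non-hyperelliptic genus-4 curves with automorphisms lying on a cone, together with the LMFDB tables of group--signature pairs to decide which subgroups $H$ have $g(\mathcal{C}/H)=1$; reality is then imposed on Wiman's normal forms (using a Schur decomposition to reduce to upper-triangular coordinate changes). Your plan — split $H$ into a fibrewise kernel $N$ and an image $\bar H\leq\PGL(2,\mathbb{C})$, run through the Kleinian list with binary (semi-)invariants, and cut down by Riemann--Hurwitz — is a plausible route to the same answer and would have the virtue of being self-contained. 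But as written it has genuine gaps. First, you analyse only those $h\in H$ ``arising from $\Aut(T\mathbb{P}^1)$''; the theorem concerns abstract $H\leq\Aut(\mathcal{C})$, so you must first show every automorphism of $\mathcal{C}$ is so induced (this follows from the canonical embedding of a non-hyperelliptic genus-4 curve and the uniqueness of the quadric — here a cone — containing it, but the step is needed). Second, your dichotomy $N\in\{1,C_3\}$ omits the case $p_6\equiv 0$, where $\eta\mapsto-\eta$ gives $N\supseteq C_2$; these reducible curves (e.g.\ the axially symmetric and inversion-symmetric families, which do appear inside forms (1) and (2)) are exactly the ones Wiman's list of irreducible curves also misses, and must be accounted for somewhere.

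The more serious gap is that the decisive computation is asserted rather than performed. The identity $\sum_P(e_P-1)=6$ is a single necessary condition on the total ramification of $\mathcal{C}\to\mathcal{C}/H$; deciding for which subgroups $H$ the quotient genus actually equals $1$ requires the fixed-point data of every element of $H$ acting on $\mathcal{C}$ (equivalently the full signature $(g_0;c_1,\dots,c_r)$), and this varies between a group and its subgroups in ways the count alone does not detect — e.g.\ for $\eta^3+a\eta\zeta^2+b(\zeta^6-1)=0$ the full group $D_6$ gives quotient genus $0$ while the index-2 subgroups $S_3$ and $C_6$ both give genus $1$, which is precisely why that curve appears as form (4). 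Likewise the elimination of $S_4$, $A_5$ and large dihedral $\bar H$, the tracking of the non-unique lifts of $\bar H$ to $\Aut(T\mathbb{P}^1)$ (the lift of $\zeta\mapsto-\zeta$ preserving form (1) is $(\zeta,\eta)\mapsto(-\zeta,\eta)$, not the tangent-bundle lift $(-\zeta,-\eta)$), and the matching of surviving invariants to the five displayed real forms are all listed by you as obstacles but not carried out. Until that enumeration is done — by hand or by importing a classification such as Wiman's and the genus-4 signature data, as the paper does — the argument is a programme rather than a proof.
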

Our result does not itself guarantee the existence of monopole spectral curves in these families, and the classes intersect (for example, 5 is a special case of 3). In previous works monopole spectral curves of the form 3 and 5 have been understood in \cite{Houghton1996c} and \cite{Hitchin1995} as corresponding to charge-3 twisted line scattering and the tetrahedrally-symmetric monopole respectively, while one special case of the form 2 was understood in \cite{Houghton1996b} as the class of inversion-symmetric monopoles, with another in \cite{Hitchin1983} as the axially-symmetric 3-monopole. Curves of the form 2 had been observed in \cite[(3.71)]{Houghton1997}, but the Hitchin constraints were only imposed for a restricted subset. In Theorem \ref{thm: D6 monopole spectral curves} we  determine the general monopole spectral curve and Nahm data in class 4 and in Theorem \ref{thm: V4 monopole spectral curves} the same for class 2. This provides the necessary data in order to plot energy density isosurfaces following \cite{Houghton1996}\footnote{We are grateful to Paul Sutcliffe for providing us with the initial code which we modified to make our plots.}; Figure \ref{fig: two pictures of a specific monopole} gives for example a previously unknown $V_4$ configuration. Along with the class of charge-3 monopoles described via an implicit condition in \cite{Braden2011a}, these form all the charge-3 monopole spectral curves currently known, which fit together as shown in Figure \ref{fig: charge-3 curves} for some parameter values. Figure \ref{fig: charge-3 curves automorphism groups.} shows the relations between the symmetry groups of the curves. 

\begin{figure}
    \centering
     \begin{subfigure}[c]{0.49\textwidth}
         \centering
         \includegraphics[width=\textwidth]{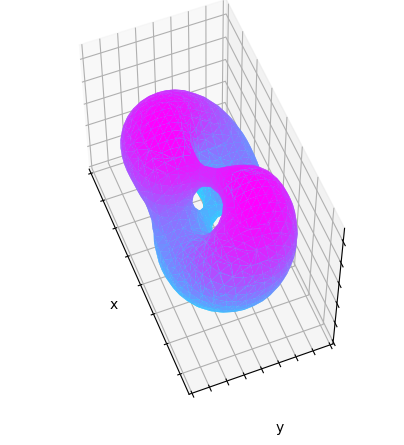}
         \caption{}
     \end{subfigure}
     \begin{subfigure}[c]{0.49\textwidth}
         \centering
         \includegraphics[width=\textwidth]{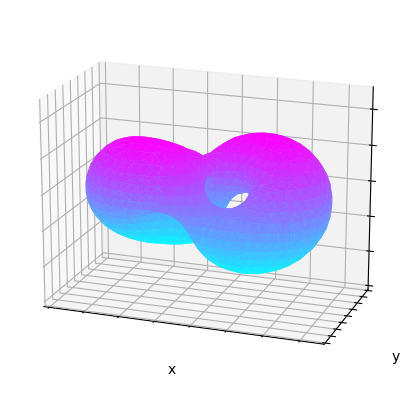}
         \caption{}
     \end{subfigure}
     \caption{Surface of constant energy density $\mathcal{E}=0.18$ for the $V_4$ monopole given by the parameters (see Theorem \ref{thm: V4 monopole spectral curves}) $m=0.6$, $\alpha = -2.0$, $\sgn = 1$} 
     \label{fig: two pictures of a specific monopole}
\end{figure}

Our approach is as follows. In section \S\ref{sec: classifying curves by automorphism group} of the paper we will prove Theorem \ref{charge3auto}. Once we have the automorphism groups of interest we will take the procedure introduced in \cite{Hitchin1995} and developed in \cite{Houghton1996,Houghton1996b,Houghton1996c} and apply this to the relevant symmetry. This procedure is recalled in Appendix \ref{sec: initial computation of nahm matrices} where Nahm's equations for case 4 are reduced to the Toda equations before further reduction is described in the text. Similarly in Appendix \ref{sec: v4 nahm matrices} Nahm's equations for the $V_4$ symmetric case are determined. This yields a complex extension of the Euler equations. We then show how these equations are solved in terms of elliptic functions on the quotient elliptic curve, first in Section \S\ref{D6 Monopoles} for the $D_6$-symmetric monopole and then in Section \S\ref{V4 Monopoles} for the $V_4$-symmetric monopole. Here the rationale for focussing on elliptic quotients is most evident: the transcendental constraints implicit in the works of Hitchin and Ercolani-Sinha become ones regarding periods of elliptic functions. We relegate to Appendix \ref{sec: hypergeometric function proofs} a number of properties of elliptic and related functions used in the text and proofs of some statements requiring these. We will not deal with the remaining $C_2$-symmetric case here discussing this further in Section \S\ref{Conclusion} which is a conclusion.

\begin{center}
\begin{figure}[H]
    \begin{tikzcd}
     & \eta^3 + \alpha_2 \eta \zeta^2 + \alpha_3 \zeta^3 + \beta(\zeta^6-1), \text{\cite{Braden2011a}} \arrow[dl, "\alpha_3=0"'] \arrow[d, "\alpha_2=0 \text{ and rotation}"] \\
     \eta^3+  \alpha_2 \eta \zeta^2  + \beta(\zeta^6-1), \text{here} \arrow[d, "\beta=0"'] & \eta^3 + c\zeta(\zeta^4-1), \text{\cite{Hitchin1995}} \\
     \eta[\eta^2 + \pi^2  \zeta^2], \text{\cite{Hitchin1983}} & \eta^3 + b\eta \zeta^2 + c \zeta(\zeta^4-1), \text{\cite{Houghton1996c}} \arrow[u, "b=0"'] \arrow[l, "c=0"] \\
     \eta\psquare{\eta^2 + a(\zeta^4+1) + b\zeta^2}, \text{\cite{Houghton1996b}} \arrow[u, "a=0"] & \eta^3 + \eta[a(\zeta^4 + 1) + b \zeta^2] + c\zeta(\zeta^4-1), \text{here}. \arrow[u, "a=0"] \arrow[l,"c=0"]
    \end{tikzcd}
    \caption{Known charge-3 spectral curves and their relations. We do not specify the constraints on the parameters}
    \label{fig: charge-3 curves}
\end{figure}
\end{center}

\begin{center}
\begin{figure}[H]
    \begin{tikzcd}
     & S_3 \arrow[dl, "\alpha_3=0"'] \arrow[d, "\alpha_2=0 \text{ and rotation}"] \\
     S_3, C_6 \leq D_6 \arrow[d, "\beta=0"'] & A_4 \leq C_3 \times S_4 \\
     C_2 \times \operatorname{SO}(2) & C_4 \leq D_4 \arrow[u, "b=0"'] \arrow[l, "c=0"] \\
     V_4 \arrow[u, "a=0"] & V_4 \leq V_4. \arrow[u, "a=0"] \arrow[l,"c=0"]
    \end{tikzcd}
    \caption{Automorphism groups of known charge-3 spectral curves and their relations, presented as $G$ or $H \leq G$ where $G$ is the full automorphism group and $H$ is the subgroup quotienting to an elliptic curve when it exists}
    \label{fig: charge-3 curves automorphism groups.}
\end{figure}
\end{center}

\section{Classifying Curves by Automorphism Group}\label{sec: classifying curves by automorphism group}

In this section we determine the charge-3 monopole spectral curves we shall focus on, beginning with minimal restrictions and gradually imposing these.

A monopole spectral curve is a compact algebraic curve $\mathcal{C}$ lying in Euclidean mini-twistor space $\mathbb{MT}$, the space of oriented lines in Euclidean 3-space. If the direction of the oriented line is given by $\zeta $, an affine coordinate of $[\zeta_0:\zeta_1]\in \mathbb{P}^1$, and $\eta\in \mathbb{C}$ describes the point in the plane perpendicular to this through which the line passes then we have $\eta\partial_\zeta\in T\mathbb{P}^1 \cong \mathbb{MT}$. A generic charge-$k$ monopole spectral curve  may then be written as the zero set of a polynomial 
$
0 = \eta^k + \sum_{r=1}^{k} p_{2r}(\zeta_0,\zeta_1) \eta^{k-r},
$
where $p_{2r}$ is a homogeneous polynomial of degree $2r$ in $\zeta_0$, $\zeta_1$; equivalently a polynomial of degree $2r$ in $\zeta$.
Now $T\mathbb{P}^1$ is non-compact and two compactifications of this are common, either by inclusion in the (singular) weighted projective space 
$
\mathbb{P}\sp{1,1,2}=
\{(\zeta_0,\zeta_1, \eta)\in\mathbb{C}\sp3\setminus\{0\} \,|\, (\zeta_0,\zeta_1,\eta)\sim(\lambda\zeta_0,\lambda\zeta_1,\lambda^2\eta), \ \lambda\in\mathbb{C}\sp\ast \},
$
or (as by Hitchin) in the Hirzebruch surface $\mathbb{F}_2$. We adopt the former view and note that the singular point $[0:0:1]$ does not lie in  $T\mathbb{P}^1$ and hence on $\mathcal{C}$.
Next, via the Veronese embedding, we have $\iota:\mathbb{P}\sp{1,1,2}
\hookrightarrow\mathbb{P}\sp{3}$, $\iota([\zeta_0:\zeta_1: \eta])=
[\zeta_0^2:\zeta_0\zeta_1:\zeta_1^2: \eta]$. Under this a homogeneous polynomial of degree $2r$ becomes a homogeneous polynomial of degree $r$ in the new coordinates and $\mathbb{P}\sp{1,1,2}$ becomes a cone, a quadric, over the cone point $\iota([0:0:1])$ (see, for example, \cite[\S 8.2.11]{Vakil2010}). Thus a monopole spectral curve may be viewed as the complete intersection of a quadric cone and a degree-$k$ hypersurface in $\mathbb{P}\sp{3}$. This is known to be a curve of genus $(k-1)^2$ (see for example \cite[Exercise V.2.9]{Hartshorne1977}) which is non-hyperelliptic for $k\ge3$ (\cite[Exercise IV.5.1]{Hartshorne1977}). For $k=3$ we then have that $\mathcal{C}$ is a non-hyperelliptic genus-4 curve.

In 1895 Wiman \cite{Wiman1895} classified all non-hyperelliptic genus-4 curves by their automorphism group and gave explicit defining equations for these. Wiman's classification had two families: curves arose either as the intersections of a cubic surface and non-singular quadric in $\mathbb{P}\sp{3}$, or as the intersection of a cubic surface and quadric cone in $\mathbb{P}\sp{3}$. Thus charge-3 monopole spectral curves with automorphism group must lie in Wiman's second family. (The two rulings of the non-singular quadric of Wiman's first family lead to projections from the curve to $\mathbb{P}\sp{1}\times\mathbb{P}\sp{1}$, which is relevant for spectral curves of hyperbolic monopoles; this will be developed elsewhere.) We note that although $\Aut(\mathbb{P}^{3})=\PGL(4,\mathbb{C})$ differs from $\Aut(\mathbb{P}^{1, 1, 2})\cong \mathbb{C}\sp{3}\rtimes\left(\GL(2,\mathbb{C}))/\{\pm \id_2\} \right)$ (with the natural action of $\GL(2,\mathbb{C})$ on $(\zeta_0,\zeta_1)$ inducing that on $(\zeta_0^2,\zeta_0\zeta_1,\zeta_1^2)$), Wiman, in determining his normal forms, considered only those transformations of $\Aut(\mathbb{P}^{3})$ that preserved the cone, and so his normal forms include all possible charge-3 monopole spectral curves.
In Table \ref{tab: all curves} we give those curves in Wiman's classification which lie on a cone presenting\footnote{We set $y=1$ in Wiman's notation, so as to make clear the connection to monopole spectral curves.} these in terms of a curve given by the vanishing of a polynomial $P(x, z)$. We also write down their full automorphism group $G := \Aut(\mathcal{C})$ and the corresponding signature $c := c_G = (g_0; c_1, \dots, c_r)$ giving the quotient genus $g_0 = g(\mathcal{C}/G)$ and the ramification indices $c_i$ of the quotient map $\mathcal{C} \to \mathcal{C}/G$ (see \cite{Magaard2002}). These have been calculated with the help of the information available from \cite{LMFDB}. We make some remarks about Table \ref{tab: all curves}.
\begin{itemize}
    \item The label $D_n$ refers to the dihedral group of order $2n$, following the convention \cite{LMFDB}. 
    \item Wiman's parameters are to be understood as generic: there may exist specific values of the parameters for which the automorphism group is larger than that indicated.
    \item Wiman provides a form where the $z^2$ term is always zero, equivalent to centring the monopole. 
    \item All the curves given are irreducible, so we can only find reducible spectral curves as limiting members of the above families. 
    \item The completeness of the above data on signatures and elliptic quotients is reliant on the completeness of the data of the LMFDB. 
    \item We recognise the curve with $C_3 \times S_4$ symmetry as corresponding to the tetrahedrally-symmetric monopole. 
\end{itemize}

\begin{center}
    \begin{table}
        \centering
        \begin{tabular}{c|c|c|c|c}
            $P$  & $G$ & $c_G$ & $H$ & $c_H$ \\ \hline \hline 
        $z^3 + z(ax^4 + bx^2 + c) + (dx^6 + ex^4 + fx^2 + g)$ & $C_2$ & $(1; 2^6)$ & $C_2$ & $(1; 2^6)$ \\
        $z^3 + z(ax^{4} + b x^{2} + c)+ d x(x^{4} + e x^{2}  + f )$ & $C_2$ & $(2; 2^2)$ & & \\
        $z^3 + z[a(x^{4} + 1) + b x^{2} ] + x[c(x^{4} + 1) + d x^{2}] $  & $C_2 \times C_2 $& $(0; 2^7)$ & & \\
        $z^3 + z[a(x^{4} + 1) + b x^{2}]  + x(x^{4} -  1)$ & $C_2 \times C_2$ & $(1; 2^3)$ & $C_2^2$ & $(1; 2^3)$ \\
        $z^3 + a zx^2 + x(x^{4} +  1)$  & $D_4$ & $(0; 2^4, 4) $ & $C_4$ & $(1; 4^2)$ \\
        $z^3 + z(x^4 + a) + (b x^{4} + c)$  & $C_4$ & $(0; 2, 4^4) $ &  &  \\
        $z^3 + azx^{2} + x^{6} + b x^{3} + 1$  &$ S_3$ &$ (0; 2^6)$ & & \\
        $z^3 + a zx^2 + x^{6} + 1$ & $D_6$ & $(0; 2^5)$ & $S_3$ \text{and} $C_6$ & $(1; 2^2)$ \\
        $z^3 + z( a x^{3} + b ) + (x^{6} + c x^{3} + d )$  &$ C_3$ & $(1; 3^3)$ & $C_3$ & $(1; 3^3)$ \\
        $z^3 + a z(x^{3} + 1) + (x^{6} + 20 x^{3} - 8 )$ & $A_4$ & $(0; 2, 3^3)$ & & \\
        $z^3 + az + x^{6} + b $ & $C_6$ & $(0; 2, 6^3)$ & & \\ 
        $z^{3} + z + x^{6}$ & $C_{12}$ & $(0; 4, 6, 12)$ & & \\
        $z^3 + a z + x^{5} + b $ & $C_5$ & $(0; 5^4)$ & & \\ 
        $z^3 + z +x^{5}$ & $C_{10}$ & $(0; 5, 10^2)$  & & \\
        $z^3 - (x^6 + ax^5 + bx^4 + cx^3 + dx^2 + ex + f)$ & $C_3$ & $(0; 3^6)$ & & \\
        $z^3 -( x^{6} + ax^{4} + b x^{2} + 1)$ & $C_6$ & $(0; 2^2, 3^3)$ & & \\
        $z^3 -x(x^{4} +a x^{2} + 1)$ & $C_6 \times C_2$ & $ (0; 2^2, 3, 6)$ & & \\
        $z^3 -(x^{6} + ax^{3} + 1)$ & $C_3 \times S_3$ & $(0; 2^2, 3^2)$ & & \\
        $z^3-(x^{5} + 1) $& $C_{15}$ & $(0; 3, 5, 15)$ & & \\
        $z^3-(x^{6} + 1)$ & $C_6 \times S_3$ & $(0; 2, 6^2)$ & & \\
        $z^3-x(x^{4} + 1)$ & $C_3 \times S_4$ & $(0; 2, 3, 12)$ & $A_4$ & $(1;2)$ \\
        \end{tabular}
        \caption{Potential charge-3 monopole spectral curves with nontrivial automorphism group and those (with subgroups) quotienting to genus 1.}
        \label{tab: all curves}
    \end{table}
\end{center}

Not all curves on the list will yield  monopoles spectral curves; for example by the following result.

\begin{prop}[\cite{Braden2010a}]\label{c3tetra}
    There are only two curves in the family $\eta^3 + \chi(\zeta^6 + b \zeta^3 + 1)=0$, $\chi, b \in \mathbb{R}$, that correspond to BPS monopoles; these are tetrahedrally-symmetric monopole spectral curves.  
\end{prop}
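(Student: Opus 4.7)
My plan is to exploit the $C_3$-symmetry of the family to descend to an elliptic quotient, and then to reformulate the transcendental Hitchin conditions as a condition on periods of that elliptic curve which, by monotonicity in $b$, admits exactly two real solutions.

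First, I would observe that the family $\eta^3 + \chi(\zeta^6 + b\zeta^3 + 1) = 0$ is invariant under $(\zeta,\eta) \mapsto (\omega \zeta, \eta)$ with $\omega = e^{2\pi i/3}$, so setting $w := \zeta^3$ descends the curve to $\mathcal{E}: \eta^3 + \chi(w^2 + bw + 1) = 0$, which is an elliptic curve (consistent with the signature $(1;3^3)$ for $C_3$ recorded in Table \ref{tab: all curves}). The quotient map $\mathcal{C} \to \mathcal{E}$ is cyclic of degree three, ramified precisely over the two roots of $w^2 + bw + 1$ and over $w = \infty$.

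Second, I would invoke the Hitchin--Ercolani--Sinha reformulation of the condition $H^0(\mathcal{C}, \mathcal{L}^s(1)) = 0$ for $s \in (0,2)$ (together with the triviality of $\mathcal{L}^2$): this is equivalent to a transcendental integrality constraint on the periods of the differential $\eta\,d\zeta/\zeta$ against a basis of holomorphic differentials on $\mathcal{C}$. Because the constraint is $C_3$-invariant, the relevant periods may be computed against the $C_3$-invariant holomorphic differentials, which are precisely the pullbacks from $\mathcal{E}$; the condition thereby reduces to a single equation on periods of a differential on the elliptic curve $\mathcal{E}$.

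Third, parametrising $\mathcal{E}$ by standard elliptic functions with modulus $k$ determined by the discriminant of $w^2 + bw + 1$, the period equation becomes a transcendental condition on $k$ expressible via complete elliptic integrals (equivalently Gauss hypergeometric functions, fitting the framework of Appendix \ref{sec: hypergeometric function proofs}). Analysing the resulting function of the real parameter $b$ on its admissible locus and cross-checking against the reality of $\chi$, one shows by monotonicity that exactly two values of $b$ solve it. An $\operatorname{SO}(3)$-rotation of $T\mathbb{P}^1$ about a $C_3$-axis of the tetrahedron then converts the canonical tetrahedral curve $\eta^3 + ic\,\zeta(\zeta^4 - 1) = 0$ into a member of our family, identifying the two solutions as the tetrahedron and its dual obtained from the two orientations sharing a given $C_3$-axis. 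The principal obstacle is the analytic treatment of the elliptic period equation, in particular proving the resulting transcendental equation in $b$ has exactly two real solutions while tracking the reality constraints on $(\chi, b)$ throughout.
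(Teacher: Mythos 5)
The paper does not prove this proposition itself --- it is imported wholesale from \cite{Braden2010a} --- so your proposal has to stand on its own, and as it stands it has two genuine gaps. First, you conflate two distinct Hitchin conditions. The Ercolani--Sinha period/integrality constraint is the reformulation of condition (ii), the triviality of $\mathcal{L}^2$; condition (iii), $H^0(\mathcal{C},\mathcal{L}^s(1))=0$ for $s\in(0,2)$, is a separate requirement that the linear flow avoid the theta divisor and is \emph{not} equivalent to any period equation. The distinction matters here: the Ercolani--Sinha constraint depends on a choice of primitive integer homology class (winding data), so even granting monotonicity in $b$ one obtains a candidate solution for each admissible choice of integers, and eliminating all but the tetrahedral values requires both the primitivity analysis and the theta-divisor condition. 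Your appeal to ``monotonicity'' to get \emph{exactly two} solutions is precisely the hard content of \cite{Braden2010a} and is asserted rather than proved.

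Second, your elliptic quotient is the wrong one for reducing the constraints. The quotient by $\sigma:(\zeta,\eta)\mapsto(\omega\zeta,\eta)$ does indeed have genus $1$, but the differential governing the Ercolani--Sinha flow, $\eta^{k-2}d\zeta/\partial_\eta P=d\zeta/(3\eta)$, is multiplied by $\omega$ under $\sigma$ and so does not descend to your $\mathcal{E}$; concretely, of the four holomorphic differentials $\{1,\zeta,\zeta^2,\eta\}\,d\zeta/\eta^2$ only $\zeta^2 d\zeta/\eta^2$ is $\sigma$-invariant, and since $\sigma$ is not induced by a spatial rotation there is no averaging argument killing the three remaining period conditions. (This is exactly the obstruction the paper records in \S\ref{D6 Monopoles} for the quotient $\mathcal{E}'$ by $\pangle{s,rt}$.) The reduction that works, and the one used in \cite{Braden2010a}, is by the spatial rotation $(\zeta,\eta)\mapsto(\omega\zeta,\omega\eta)$, under which $d\zeta/(3\eta)$ \emph{is} invariant: this gives the genus-$2$ hyperelliptic quotient $y^2=(x^3+\alpha_3)^2+4\beta^2$ of (\ref{hyperck}), whose Jacobian then splits into elliptic pieces where the period analysis is carried out. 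Your final step --- rotating $\eta^3+ic\zeta(\zeta^4-1)=0$ into the family --- correctly shows the tetrahedral curves lie in it, but contributes nothing to the uniqueness claim.
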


\subsection{Moduli Space Dimension}\label{sec: moduli space}
We now also briefly discuss another aspect of this group theoretic approach that can be used to identify particularly tractable monopole curves.

There is a moduli space $N_k$ of charge-$k$ monopoles up to gauge transform, which one typically enlarges by a phase to $M_k$ which has a natural action of the Euclidean group $\operatorname{E}(3)$ and circle group $U(1)$. Associated to this is the submanifold of (strongly-) centred charge-$k$ monopoles $M_k^0 \subset M_k$ with action of the orthogonal group $\operatorname{O}(3)$ which parametrises monopoles up to gauge transform with fixed centre \cite{Atiyah1988, Manton2004}. The spectral curves corresponding to such monopoles have $a_1(\zeta)=0$, while the corresponding Nahm data has $\Tr(T_i)=0$. $M_k^0$ is a totally geodesic manifold of real dimension $4(k-1)$, and given $G \leq \operatorname{O}(3)$ we may consider the submanifold of the moduli space of $G$-invariant strongly-centred monopoles $(M_k^0)^G$, which is also totally geodesic \cite{Bielawski2020}. This we shall distinguish from $(M_k^0)^{[G]}$ which is the moduli space of monopoles invariant under an element of the conjugacy class of $G$ inside $\operatorname{SO}(3)$; clearly $(M_k^0)^G \subset (M_k^0)^{[G]}$ and when $G$ is discrete we have $\dim_\mathbb{R} (M_k^0)^G = \dim_\mathbb{R} (M_k^0)^{[G]} - \dim_\mathbb{R} \operatorname{SO}(3)$. As scattering of magnetic monopoles is approximated by geodesic motion in the moduli space \cite{Stuart1994}, if one can find $G \leq \operatorname{O}(3)$ for which $\dim (M_k^0)^G = 1$, it is known that the corresponding 1-parameter family of monopoles corresponds to a scattering process. We may use group theory to help identify such families via the following result. 

\begin{prop}
    For discrete $G$, $\dim_{\mathbb{R}} (M_3^0)^G \leq  3g_0 - 3 + r$.
\end{prop}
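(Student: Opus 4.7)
The plan is to bound $\dim_\mathbb{R}(M_3^0)^G$ by passing first to spectral curves, then to abstract $G$-Galois covers, where Hurwitz theory supplies the bound $3g_0 - 3 + r$, with the reality structure responsible for keeping the result a real (not twice-real) dimension.

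The Hitchin spectral-curve assignment gives a finite-to-one map from $(M_3^0)^G$ into the moduli $\mathcal{S}^G$ of $G$-invariant charge-3 monopole spectral curves, the $\mathrm{E}(3)\times U(1)$-action having already been quotiented out in passing to $M_3^0$. Forgetting the transcendental Hitchin line-bundle constraints and the projective embedding into $T\mathbb{P}^1$ (both of which can only cut dimension) reduces the problem to bounding the complex moduli $\mathcal{H}$ of abstract $G$-Galois covers $\mathcal{C}\to\mathcal{C}/G$ with signature $(g_0;c_1,\dots,c_r)$. By Riemann's existence theorem, such a cover is determined up to a finite choice (a transitive tuple of order-$c_i$ elements of $G$ satisfying the Riemann relation, modulo simultaneous conjugation) by the pointed base $(\mathcal{C}/G, B)$ with $B$ the $r$-element branch set. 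Hence $\mathcal{H}$ is a finite cover of $\mathcal{M}_{g_0,r}$, and so $\dim_\mathbb{C}\mathcal{H}\leq \dim_\mathbb{C}\mathcal{M}_{g_0,r} = 3g_0 - 3 + r$.

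The antiholomorphic involution $\tau$ on $T\mathbb{P}^1$ commutes with every $G\leq\mathrm{O}(3)$ and so descends to a real structure on $\mathcal{S}^G$ and on $\mathcal{H}$; at a smooth real point the tangent space splits into $\pm 1$-eigenspaces of $d\tau$ of equal real dimension, whence the real locus of a smooth complex variety of complex dimension $d$ has real dimension $d$. Chaining the inequalities gives $\dim_\mathbb{R}(M_3^0)^G \leq 3g_0 - 3 + r$, as required.

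The main subtlety I anticipate is justifying that the passage from the abstract Hurwitz moduli $\mathcal{H}$ to spectral curves actually embedded in $T\mathbb{P}^1$, taken modulo $\mathrm{E}(3)\times U(1)$, introduces no extra moduli. This amounts to rigidity of the preferred degree-3 pencil projecting $\mathcal{C}$ to the base $\mathbb{P}^1$ of the mini-twistor space once the abstract curve and its $G$-action are fixed: the base $\mathbb{P}^1$ carries no moduli of its own, and the residual $\mathrm{SO}(3)\subset \mathrm{E}(3)$-freedom in choosing the realisation of the $G$-action on $\mathbb{P}^1$ is precisely what has already been quotiented out.
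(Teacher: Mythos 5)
Your proposal is correct and follows essentially the same route as the paper: pass from the monopole moduli to the locus of curves with a $G$-action of the given signature, bound its complex dimension by $\dim_{\mathbb{C}}\mathcal{M}_{g_0,r}=3g_0-3+r$, convert to a real dimension via the antiholomorphic involution, and absorb the residual $\operatorname{SO}(3)$ freedom. The only differences are presentational: you rederive the dimension count from Riemann's existence theorem where the paper cites Magaard et al., and you prove the real-locus dimension statement directly where the paper invokes Earle's Teichm\"uller-theoretic result.
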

\begin{proof}
\cite[Lemma 3.1]{Magaard2002} gives that the complex dimension of each component of the locus of equivalence classes of genus $g \geq 2$ curves admitting an action of a group isomorphic to $G$ with signature $c$ is (provided it is non-empty) $\delta(g, G, c):= 3(g_0-1)+r = \dim_{\mathbb{C}} \mathcal{M}_{g_0, r}$ the moduli space of genus $g_0$ curves with $r$ marked points. The $\operatorname{SO}(3)$ action on $(M_3^0)^{[G]}$ is trivial on the moduli space of curves because it induces a birational isomorphism. The result then follows as the $\operatorname{SO}(3)$ orbits of the moduli space of monopoles will form a component of this locus, hence\footnote{As described in Earle, points in the moduli space of genus $g \geq 2$ Riemann surfaces correspond to equivalence classes of complex structures on a fixed smooth surface of the corresponding genus. There is a natural conjugate involution on this moduli space of complex structures, such that any complex structure fixed under an orientation-reversing involution of the surface corresponds to a fixed point of the conjugate involution of the moduli space. Hence we may consider the moduli space $\mathcal{M}_{g_0, r}^\tau \subset \mathcal{M}_{g_0, r} $ of curves invariant under $\tau$ by this procedure.} $\dim_\mathbb{R} (M_3^0)^G = \dim_\mathbb{R} (M_3^0)^{[G]} - \dim_\mathbb{R} \operatorname{SO}(3) \leq \dim_\mathbb{R} \pround{\mathcal{M}_{g_0, r}^\tau}$, and using the fact from Teichm\"uller theory that $\dim_\mathbb{R} \pround{\mathcal{M}_{g_0, r}^\tau} = \dim_\mathbb{C}\mathcal{M}_{g_0, r}$ \cite[\S3.1]{Earle1971}.
\end{proof}
\begin{remark}
    Note in the above we could have used $H \leq G$ and its corresponding signature, but this would have given a weaker bound as $\delta(g, G, c_G) \leq \delta(g, H, c_H)$ \cite{Magaard2002}. 
\end{remark}

The remarkable fact about this is that the bound depends on the signature only.  The curves of \cite{Hitchin1995}, with automorphism group $H$ of order $2k(k-1)$ for $k=3, 4, 6$,  have signature $c_H = (1;k-1)$ \cite[Proposition 4]{Hitchin1995}. For these curves and all known monopole spectral curves of charge $3$ we have $\dim (M_k^0)^H = \delta(g, H, c_H) - 1$, and hence one might conjecture that this is always true in the case $g_0=1$. A calculation for the case of inversion-symmetric monopoles considered in \cite{Bielawski2020} for which there is a $C_2$ action $\eta \mapsto -\eta$ shows that such a conjecture would certainly not be true for all $g_0$.

\subsection{Genus-1 Reductions}\label{sec: genus-1 reductions}
Table \ref{tab: all curves} gives us a list of putative spectral curves with symmetry before we have imposed the further constraints of Hitchin. We know from \cite{Hitchin1983} that Nahm's equations correspond to a linear flow in the Jacobian of the corresponding spectral curve $\mathcal{C}$; the direction of this linear flow is given by the Ercolani-Sinha vector ${\bm{U}}$ \cite{Ercolani1989}. Braden \cite{Braden2011} has shown that when we have a symmetry group $G$ we may be able to reduce to the quotient curve $\mathcal{C} \overset{\pi}{\to} \mathcal{C}' :=\mathcal{C}/G$ and reduced Ercolani-Sinha vector $\bm{U}'$ when  ${\bm{U}} = \pi^\ast \bm{U}'$. For example charge-$k$ monopoles with $C_k$ symmetry reduce to questions about a genus-$(k-1)$ hyperelliptic curve. The $k=3$ case was studied in \cite{Braden2011a}.

 Notwithstanding the attendant simplifications, the list of Table \ref{tab: all curves} is too long for the purposes of this letter and we require a further criterion to reduce this. Here we adopt the following: does the genus-$4$ spectral curve (assumed with real structure) quotient (either by $\Aut(\mathcal{C})$ or a subgroup) to an elliptic curve? The rationale for this is that the remaining of Hitchin's conditions are most straightforwardly answered for elliptic curves; equivalently the Ercolani-Sinha
 constraint becomes one on the real period of an elliptic curve. There are also a number of curves known with this property \cite{Hitchin1995, Houghton1996, Houghton1996b, Houghton1996c}. 
 
Thus we seek curves $\mathcal{C}$ with real structure from Wiman's list for which there exists $H \leq \Aut(\mathcal{C})$ such that $g(\mathcal{C}/H)=1$. Here we may use the database of \cite{LMFDB} which has enumerated all the possible $H$ and the corresponding signatures for genus-4 curves. We may then use our knowledge of the explicit forms of the curves to match up these cases, which leaves us with the reduced list in the final two columns of Table \ref{tab: all curves}. As previously noted, the $H = A_4$ case corresponds to the tetrahedrally-symmetric monopole \cite{Hitchin1995}, and the $H=C_4$ case has already been solved in \cite{Houghton1996c}. We also see that the cases $H= S_3$ and $H=C_6$ arise from the same curve, indicating that the curve has two distinct quotients to an elliptic curve. 

In the following sections we will investigate in more detail the two new cases $H=C_6$ (or equivalently $H=S_3$) with full automorphism group $G=D_6$, and $H = V_4$ (with full automorphism group $G=V_4$); we do not treat the $C_2$ case here. We will begin with the $D_6$ case which is both illustrative and simpler, though ultimately the new solutions and their scattering family are less interesting. 

Before turning to these however we may complete the proof of Theorem \ref{charge3auto}. With the exception of the $H=C_3$ curve, imposing reality on the curves with groups $H$ listed in Table \ref{tab: all curves} yields the curves of Theorem \ref{charge3auto} (and in the same order). Note not all $M \in \GL_2(\mathbb{C})/\pangle{-1} \leq \Aut(\mathbb{P}^{1, 1, 2})$ will commute with the action of $\tau$, but $S \in \operatorname{SU}(2)$ will. One may use Schur decomposition to write $M = S T S^{-1}$ for some $S \in \operatorname{SU}(2)$, $T$ an upper-triangular matrix, and so when imposing reality on Wiman's normal forms one should consider the orbits under upper-triangular matrices. The only real forms present in the orbit of the $G=H=C_3$ family have $a=b=0$ in the corresponding defining equation $P$; the resulting curve then lies in the family described by Proposition \ref{c3tetra}. Only the tetrahedrally-symmetric monopole within this family quotients to an elliptic curve and by a rotation this may be written as $\eta^3 + ia\zeta(\zeta^4 - 1)=0$, the final entry of the Theorem. We have thus established Theorem \ref{charge3auto}.

\section{\secmath{D_6} Monopoles}\label{D6 Monopoles}

To understand the spectral curves of this section it helps to first begin with the general centred $C_k$ invariant spectral curve $\mathcal{C}$ (with reality imposed),
\begin{equation}\label{cyclick}
\eta^k+ \alpha_2\eta\sp{k-2}\zeta^2+
 \alpha_3\eta\sp{k-3}\zeta^3+ \ldots+ \alpha_{k-1}\eta\,\zeta\sp{k-1}+ \alpha_k\zeta\sp{k}+\beta [
\zeta\sp{2k}+(-1)\sp{k}]=0,
\end{equation}
where $\alpha_k,\beta \in\mathbb{R}$. This is invariant under
$s: (\eta,\zeta)\rightarrow (\omega\eta,\omega\zeta)$,
$\omega=\exp(2\pi\imath/k)$ with $C_k=\pangle{s}$. The work of \cite{Braden2011} shows (\ref{cyclick}) is the unbranched cover of the hyperelliptic curve
\begin{align}\label{hyperck}
y^2&=(x^k+ \alpha_2 x\sp{k-2}+ \alpha_3 x\sp{k-3}+ \ldots+  \alpha_{k})^2 -(-1)^k
4\beta^2,
\end{align}
where $x=\eta/\zeta$ and $y=\beta[ \zeta\sp{k} - (-1)^k\zeta\sp{-k}]$. The curve (\ref{cyclick}) also has the symmetry 
 $t:(\zeta, \eta) \mapsto (-1/\zeta, -\eta/\zeta^2)$ and
 $G=\pangle{s,t}=D_k$ is the full automorphism group. The transformation $t$
 corresponds to a reflection $\mathbf{\underline{v}}\rightarrow
 \diag(1,-1,1)\mathbf{\underline{v}}$
 in $\operatorname{O}(3,\mathbb{R})$ and it becomes the
 hyperelliptic involution $t:(x,y)\rightarrow (x,-y)$ on the
 quotient curve. For $k=3$ we are describing the curve in Table 
\ref{tab: all curves} with full automorphism group $G=S_3$ ($=D_3$ in \cite{LMFDB} notation).
Further, the work of \cite{Braden2011} shows that Nahm's equations for every charge-$k$ monopole with $C_k$ rotational symmetry are equivalent to the $A_{k-1}\sp{(1)}$\footnote{This is the notation of \cite{Helgason1978}.} affine Toda equations
(in real Flaschka variables)
\begin{equation}\label{gentoda}
a_i^\prime = \frac{1}{2}a_i (b_i - b_{i+1}), \quad b_i^\prime = a_i^2 - a_{i-1}^2, 
\end{equation}
where $i$ is taken mod $k$, and we use$\phantom{.}^\prime$ to denote $\frac{d}{ds}$. 
These equations may also be found in other ways.
In Appendix \ref{sec: initial computation of nahm matrices} 
we describe how taking the $C_k$ invariant polynomials $Q_i = \zeta_0^i \zeta_1^i, i=1, \dots, k$ and $Q_{k+1} = \zeta_0^k - \zeta_1^k$
as the inputs to the procedure of \cite{Hitchin1995} we obtain the equations (\ref{gentoda}).

In general the solutions to the Toda system (\ref{gentoda}) linearise on the genus-$(k-1)$ Jacobian of the curve (\ref{hyperck}).
For $k=3$ this was the approach taken in \cite{Braden2011a} where a family of monopoles including the tetrahedrally-symmetric monopole was investigated. Are simplifications possible? In the remainder of this section we shall show that for $k=3$ a one-parameter family of elliptic Nahm data exists with $C_3$ symmetry. As this is contrary to results in the literature we begin with four results that lead to such an
elliptic reduction before determining the parameters that yield
Nahm data. Three particular points in the family will be identified before concluding with a description of the scattering described by the family.

\subsection{Four Lessons} For $k=3$ the equations (\ref{gentoda}) take the form (with $a_0\equiv a_3$, $b_0\equiv b_3$)
\begin{equation}
\begin{aligned}\label{todak3}
    a_0^\prime =& \frac{1}{2}a_0 ( b_3 - b_1), &
    a_1^\prime =& \frac{1}{2} a_1  (b_1 - b_2), &
    a_2^\prime =& \frac{1}{2} a_2 ( b_2 - b_3),\\
    b_1^\prime =& a_1^2 - a_0^2, &
    b_2^\prime =& a_2^2 - a_1^2 , &
    b_3^\prime =& a_0^2 - a_2^2 ,
\end{aligned}
\end{equation}
coming from Nahm matrices derived in Appendix \ref{sec: initial computation of nahm matrices} given by 
\begin{align}\label{nahmtoda}
    T_1 &= \frac{1}{2}\begin{pmatrix}
        0 & a_1 & -a_0 \\ -a_1 & 0 & a_2 \\ a_0 & -a_2 & 0  
    \end{pmatrix}, &
    T_2 &= \frac{1}{2i} \begin{pmatrix}
        0 & a_1 & a_0 \\ a_1 & 0 & a_2 \\ a_0 & a_2 & 0  
    \end{pmatrix}, &
    T_3 &= \frac{-i}{2} \begin{pmatrix} b_1 & 0 & 0 \\ 0 & b_2 & 0 \\ 0 & 0 & b_3 \end{pmatrix}.
\end{align}
Here we find the constants
\begin{align*}
    \alpha_2 &= b_{1} b_{2} + b_{1} b_{3} + b_{2} b_{3} + a_{0}^{2} + a_{1}^{2} + a_{2}^{2}, \quad
    \alpha_3 = b_{1} b_{2} b_{3} + b_{1} a_{2}^{2} + b_{2} a_{0}^{2} + b_{3} a_{1}^{2}, \quad
    \beta = a_0 a_1 a_2,
\end{align*}
for the (centred) spectral curve
\begin{align*}
\eta^3 + \alpha_2 \eta \zeta^2 + \alpha_3 \zeta^3 + \beta (\zeta^6 - 1)=0
\end{align*}
which covers
\begin{equation}\label{cyclick3}
y^2=(x^3+ \alpha_2 x+ \alpha_3)^2 + 4\beta^2.
\end{equation}
We have $6$ differential equations, $6$ variables and three conserved quantities.

\subsubsection{Direct Simplification} Appendix \ref{sec: initial computation of nahm matrices} shows that we may use
the constants $\alpha_2,\alpha_3, 0 = \sum b_i$ to eliminate the $b_i$, resulting in the equations
\begin{align*}
0 &= \sum_{i=0}^2 a_i^2 - \alpha_2 - \frac{1}{3}(d_1^2 + d_1 d_2 + d_2^2), \ 
0 = a_1^2 d_2 - a_2^2 d_1 + \alpha_3 + \frac{1}{3}\alpha_2(d_1 - d_2) + \frac{1}{27}(d_1 - d_2)^3,
\end{align*}
where we have introduced $d_i = {2a_i^\prime}/{a_i}$. 
Using  $\beta = a_0 a_1 a_2$ to eliminate $a_0$ we then have
two non-linear ODE's in two variables,
the maximal reduction one can achieve with generic $\alpha_i$ and $\beta$. The appendix shows further that if $a_1^2 = a_2^2$ additional simplification is possible; and that we may 
consistently set $a_1^2 - a_2^2 = 0$ provided $b_2 a_1^2 = 0$. As $b_2^\prime = a_2^2 - a_1^2$, this means we can consistently set $a_1^2 = a_2^2$ and $b_2 = 0$. Making these restrictions we find $\alpha_3=0$ and that we reduce to one equation
\begin{equation*}
a_1^2\left({2\frac{da_1}{ds}}\right)^2 = \beta^2 + 2 a_1^6 - \alpha_2 a_1^4. 
\end{equation*}
Upon setting $u=a_1^2$ this becomes
\begin{equation}\label{redeqn}
\left({\frac{du}{ds}}\right)^2 = \beta^2 + 2 u^3 - \alpha_2 u^2 , 
\end{equation}
to which we shall return. We record that the $j$-invariant of the
associated elliptic curve $y^2=\beta^2 + 2 u^3 - \alpha_2 u^2$
is $16\alpha_2^6/(\beta^2[\alpha_2^3-27 \beta^2] )$.

\subsubsection{Sutcliffe's ansatz} Some time ago, in the context of Seiberg-Witten theory, Sutcliffe \cite{Sutcliffe1996a} gave an ansatz for charge-$k$ cyclically symmetric monopoles in terms of affine Toda theory. With 
$a_j=\gamma e^{(q_j-q_{j+1})/2}$, $b_j=q_j'$ equations (\ref{gentoda})
follow from a Hamiltonian\footnote{This Hamiltonian is unbounded from below: such is necessary as the monopole boundary
conditions require a pole at $s=0,2$ hence the momenta and correspondingly the potential must also be unbounded below. Thus while the dynamical system being described is integrable, a corresponding interpretation in terms of a mechanical system is less helpful. Further, while the $a_i$ will always be real 
throughout we have freedom to choose their sign and we 
will make sign choices for the $a_i$ where we cannot take $\log a_i$ and obtain real values for the associated Toda position variables. 
}
$
H=\frac12 \sum_{j=1}\sp{k} b_j^2-\sum_{j=0}\sp{k-1} a_j^2
=\frac12 \sum_{j=1}\sp{k} p_j^2-\gamma^2\sum_{j=0}\sp{k-1} e^{q_j-q_{j+1}}.
$
(The constant $\gamma$ here is to account for the constant $\prod_{i=0}\sp{k-1}a_i=\gamma^k=(-1)\sp{k-1}\beta$.)
Sutcliffe showed that for $k=2$ Nahm data could be constructed, but for $k=3$ although he could solve the equations he couldn't find solutions with the correct pole behaviour. The solution was obtained from the infinite chain solution as follows. We have
from
$$\left(\ln a_j^2\right)''=-a_{j-1}^2+2 a_{j}^2-a_{j+1}^2$$
and the standard elliptic function identity for the Weierstrass $\wp$-function
$$
\frac{d^2}{du^2}\,\ln[ \wp(u)-\wp(v) ]= -\wp(u+v) +2 \wp(u)-\wp(u-v)
$$
{that with $u= j u_0+t+t_0$ and $v=u_0$ then}
$$
\frac{d^2}{dt^2}\,\ln[  \wp(j u_0+t+t_0)-\wp(u_0) ]=- \wp([j+1] u_0+t+t_0) +2 \wp(j u_0+t+t_0)
 -\wp([j-1] u_0+t+t_0)
$$
{and we may identify}
$a_j=\wp(j u_0+t+t_0)- \wp(u_0)$. This yields the solution for the infinite chain and we must still impose periodicity to obtain a solution. Imposing periodicity yields (for $k=3$) that
$a_j=\wp(2j K/3 +t)-\wp(2K/3)$ which is equivalent to the solution of \cite{Sutcliffe1996a} which is given in Jacobi elliptic functions.\footnote{
To make connection with \cite{Sutcliffe1996a} we use Lawden's notation \cite[\S6.3.1, \S6.9]{Lawden1989}. Thus for $k=3$ we take $u_0=2K/3$. Now
\begin{align*}
dc^2(u)&=\frac{ \wp(u)-e_2}{\wp(u)-e_1} = 1 + \frac{ e_1-e_2}{\wp(u)-e_1} 
=1 + \frac1{ e_1-e_3}\left[{\wp(u+\omega_1)-e_1} \right]=
1+\left[{\wp(u+\omega_1)-e_1} \right],
\\
cs^2(2K/3)&=\wp(2K/3)-e_1.
\end{align*}
Note Sutcliffe's \lq$q_j^2$\rq\ is our $a_j$. Then \cite[3.41]{Sutcliffe1996a}  is
$dc^2(u)-1-cs^2(2K/3)=\wp(u+\omega_1)-\wp(2K/3)=\wp(u+K)-\wp(2K/3)$
and so with $u=2j K/3 +t+K$ (his choice of $\delta$) we get
$a_j=\wp(2j K/3 +t)-\wp(2K/3)$
and the corresponding asymptotics given in \cite[3.42-45]{Sutcliffe1996a}.} Now the ansatz employed here forces only one of the $a_j$ to be singular at any point, and this means the pole condition on the Nahm matrices cannot be satisfied.
If we are to find an alternative solution that does indeed yield a monopole then this would suggest that one appropriate route would be to pick a simplification which forces multiple variables to have poles simultaneously. Such is the case
when $a_1^2=a_2^2$ found previously.

\subsubsection{Imposing Symmetry on Nahm Matrices}
We next show that $a_1^2=a_2^2$ follows from the symmetry\footnote{We have the correspondences between the $\mathbb{R}\sp3$ transformations and $(\zeta,\eta)$ actions
$$ 
   r:= \begin{pmatrix}
        1 & 0 & 0 \\ 0 & -1 & 0 \\ 0 & 0 & -1  
    \end{pmatrix} \leftrightarrow
   (\zeta, \eta) \mapsto (1/\zeta, -\eta/\zeta^2),\quad
   rt:=\begin{pmatrix}
        1 & 0 & 0 \\ 0 & 1 & 0 \\ 0 & 0 & -1
    \end{pmatrix}\leftrightarrow
   (\zeta, \eta) \mapsto (1/\bar{\zeta}, -\bar{\eta}/\bar{\zeta}^2).
$$
If the rotation $s:=(\eta,\zeta)\rightarrow (\omega\eta,\omega\zeta)$ takes place in the $xy$-plane then  we obtain the point groups $D_{k}=\pangle{s,r}$, $C_{kv} \cong D_{k}=\pangle{s,t}$ and $C_{kh} \cong C_k \times C_2=\pangle{s,rt}$. The prismatic dihedral group $D_{kh}=\pangle{s,r,t}$ is obtained by adding any two of the above to the rotations, so giving the third. Abstractly $D_{kh}\cong D_k\times C_2$. }   $\mathbf{\underline{v}}\rightarrow A\mathbf{\underline{v}}:= \diag(1,-1,-1)\mathbf{\underline{v}}$ in $\operatorname{SO}(3,\mathbb{R})$. From \cite[Equation (8.169)]{Manton2004} we know that the conditions for the Nahm matrices to be symmetric under $A$ are that 
$$
T_1 = CT_1 C^{-1}, \quad -T_2 = C T_2 C^{-1}, \quad -T_3 = C T_3 C^{-1},
$$
for some constant invertible matrix $C$. Recalling the form of the $T_i$ from (\ref{nahmtoda}) we see that as $T_3$ is diagonal and traceless the only way to achieve the invariance under $A$ is if at least one of the $b_i$ is 0 and $C$ permutes the other two. By conjugating with a permutation matrix we can without loss of generality pick $b_2 = 0$ so $b_1 = -b_3$ which gives that the generic $C$ is $C = \begin{psmallmatrix}
    0 & 0 & a \\ 0 & b & 0 \\ c & 0 & 0
\end{psmallmatrix}$. Picking a generic $a, b, c$ we get 
\[
a_0(a+c) = a_1 a - a_2 b = a_1 c + a_2 b = a_1 b + a_2 c = a_2 a - a_1 b = 0.
\]
To avoid having an $a_i = 0$ we required $a= -c$, and so these reduce to 
\[
a_1 a - a_2 b = 0 = a_1 b - a_2 a ,
\]
and consequently $(a/b)^2=1$ and $a_1 =\pm a_2$ yielding the desired $a_1^2=a_2^2$. Note that this also means $\alpha_3 = 0$.

\subsubsection{Reduction of the spectral curve; folding.}
In order to get the curve with $D_6$ symmetry of Table \ref{tab: all curves} we must set $\alpha_3=0$. We have seen that for $k=3$ this is a consequence of the symmetry
 $r:(\zeta,\eta)\rightarrow (1/\zeta,-\eta/\zeta^2 )$. 
For general $k$ this means we keep only the even terms of (\ref{cyclick}),
\begin{equation}\label{dihk}
\eta^k+ \alpha_2\eta\sp{k-2}\zeta^2+
 \alpha_4\eta\sp{k-4}\zeta^4+ \ldots +\beta [
\zeta\sp{2k}+(-1)\sp{k}]=0.
\end{equation}
The full automorphism group of this curve is $D_{k}\times C_2$; 
for $k=3$ this is the curve with full automorphism group
$D_6\cong D_3\times C_2$ that we are interested in. 
Setting $x=\eta/\zeta$ in (\ref{dihk}) we have
\begin{align*}
x^k+ \alpha_2 x\sp{k-2}+ \alpha_4 x\sp{k-4}+ \ldots+  \alpha_k+\beta [
\zeta\sp{k}+\zeta\sp{-k}]&=0, &&k\ \text{even},\\
x^k+ \alpha_2 x\sp{k-2}+ \alpha_4 x\sp{k-4}+ \ldots+  \alpha_{k-1} x+\beta [
\zeta\sp{k}-\zeta\sp{-k}]&=0, &&k\ \text{odd}.
\end{align*}
If $y=\beta[ \zeta\sp{k} - (-1)^k\zeta\sp{-k}]$ then 
$r:(x,y)\rightarrow (-x, (-1)^{k-1}y)$; thus
$y$ is invariant under $r$ only for $k$ odd, in which case it will be a function on the quotient curve $\hat{\mathcal{C}}/\pangle{s,r}$; for $k$-even $v=xy$ is invariant. Thus we have curves
\begin{align*}
v^2&=x^2(x^k+ \alpha_2 x\sp{k-2}+ \alpha_4 x\sp{k-4}+ \ldots+  \alpha_k)^2
-4\beta^2 x^2&&k\ \text{even},\\ 
y^2&=(x^k+ \alpha_2 x\sp{k-2}+ \alpha_4 x\sp{k-4}+ \ldots+  \alpha_{k-1} x)^2
+4\beta^2 &&k\ \text{odd}.
\end{align*}
Setting $k=2l$ or $k=2l-1$ for the even and odd cases of the curves then with
$u=x^2$ we have these curves covering $2:1$ the curves
\begin{align}\label{diheven2}
v^2&=u(u^l+ \alpha_2 u\sp{l-1}+ \alpha_4 u\sp{l-2}+ \ldots+  \alpha_k)^2
-4\beta^2u &&k\ \text{even},\\ \label{dihodd2}
y^2&=u(u^{l-1}+ \alpha_2 u\sp{l-2}+ \alpha_4 u\sp{l-3}+ \ldots+  \alpha_{k-1})^2
+4\beta^2 &&k\ \text{odd}.
\end{align}
The first has genus $l$ and the second has genus $l-1$. 
Under the cyclic transformation, it was shown in \cite{Braden2011}
that
$$\frac{\eta^{k-2}d\zeta}{\partial_\eta P}=\pi\sp\ast\left(-\frac1k \frac{x^{k-2}dx}{y}\right)
$$
for the curve (\ref{hyperck})
and we observe that this differential is invariant under $r$ for $k$ both even and odd. Further
$$
\frac{x^{k-2}dx}{y}=\begin{cases} \dfrac{x^{2l-2}dx}{y}=\dfrac{x^{2l-2}du}{2xy}=
\dfrac{u^{l-1}du}{2v},\\
\dfrac{x^{2l-3}dx}{y}=\dfrac{x^{2l-4}du}{2y}=\dfrac{u^{l-2}du}{2y}.
\end{cases}
$$
In each case we obtain the maximum degree in $u$ differential on the corresponding hyperelliptic curve and the work of
\cite{Braden2011} tells us the Ercolani-Sinha vector, if it exists, will reduce to one on the quotient curve.

In particular the $k=3$ curve $y^2=(x^3+ \alpha_2 x)^2 + 4\beta^2$ covers the elliptic curve $\mathcal{E}=\mathcal{C}/H$,
$$y^2 =u(u+\alpha_2)^2+4\beta^2,$$
with $H=\pangle{s,r}\cong S_3$.
The
$j$-invariant of this curve 
is $j_\mathcal{E}=16\alpha_2^6/(\beta^2[\alpha_2^3-27 \beta^2] )$, the value observed earlier. We note that the genus-$2$ curve also covers the  elliptic curve $\mathcal{E}'=\mathcal{C}/H'$,
$$w^2 =u^2(u+\alpha_2)^2+4\beta^2 u,$$
where now $H'=\pangle{s,rt}\cong C_6$ with $w=xy$ the invariant coordinate. Because $\pi\sp\ast(du/(2w))=dx/y$ does not pull back to the differential appearing in the Ercolani-Sinha constraint we cannot solve the Hitchin constraints in terms of $\mathcal{E}^\prime$. We
record that the curve is in general distinct
$j_{\mathcal{E}'}=\psquare{\alpha_2^3 ( \alpha_2^3 - 24 \beta^2)^3}/\psquare{\beta^6 ( \alpha_2^3 - 27 \beta^2)}$. We have that ${\mathcal{E}}$ and
${\mathcal{E}'}$ are the two quotients identified in
Table \ref{tab: all curves}.

We remark that the reduction of the spectral curve we have just described may be understood directly in terms of the Toda equations and \lq folding\rq. For the $k=3$ case at hand set
$e\sp{\rho_i}:=a_i^2=\beta\sp{2/3}e\sp{q_i-q_{i+1}}$ (so that
$a_0a_1a_2=\beta$) and again take $b_j=q_j'$ and Hamiltonian 
$
H=\frac12 \sum_{j=1}\sp{k} b_j^2-\sum_{j=0}\sp{k-1} a_j^2
=\frac12 \sum_{j=1}\sp{k} p_j^2-\beta^{2/3}\sum_{j=0}\sp{k-1} e^{q_j-q_{j+1}}.
$
Then the Toda equations take the form
\begin{align*}
     \rho_i^{\prime \prime} &= 2e^{\rho_i} - e^{\rho_{i-1}} - e^{\rho_{i+1}} =\overline{K}_{ij} e\sp{\rho_j}
    \end{align*}
where $\overline{K}_{ij}$ is the extended Cartan matrix of $A_2$. Folding \cite{Olive1983} corresponds to the action $\rho_i\rightarrow\rho_{\sigma(i)}$ by a diagram automorphism $\sigma$ of
the extended Dynkin diagram: this retains integrability and here
corresponds to identifying $\rho_1=\rho_2:=\rho_{12}$, equivalently $a_1^2=a_2^2$. Using $e\sp{\rho_{0}}=\beta^2 e\sp{-2\rho_{12}}$ the equations of motion 
$\rho_{12}''=e\sp{\rho_{12}}-e\sp{\rho_{0}}$ and
$\rho_0''=2(e\sp{\rho_{0}}-e\sp{\rho_{12}})$ reduce to the one equation,
$$\rho_{12}''=e\sp{\rho_{12}}-\beta^2 e\sp{-2\rho_{12}}, $$
the ODE reduction of the Bullough-Dodd equation, a known integrable equation. This may be directly integrated. With
$u=e\sp{\rho_{12}}$ we obtain precisely (\ref{redeqn}). 
More generally we are seeing the reduction by folding $A_{2l-1}\sp{(1)}\rightarrow C_l\sp{(1)}$ for $k=2l$ even, and $A_{2[l-1]}^{(1)} \to A_{2[l-1]}^{(2)}$ for $k=2l-1$ odd, both coming from an order-2 symmetry of the Dynkin diagram.


\subsection{Solving for Nahm Data}
A number of different arguments lead us then to an elliptic reduction of the Toda equations for $k=3$ with corresponding ODE (\ref{redeqn}). The aim of this subsection is to show that from this ODE Nahm data can be constructed. In doing so we will use properties of hypergeometric functions, and we lay out some of these details in Appendix \ref{sec: hypergeometric function proofs}.

We have seen that the reduction leads to $a_1^2=a_2^2$ and $b_2=0$. In continuing to solve for the Nahm data one finds that the choice of sign of $a_2$ relative to $a_1$ does not affect the ability to impose the Hitchin constraints. Indeed, changing the choice of sign merely corresponds to changing the sign of $\beta$, and again as we will see this does not restrict the spectral curve. As such we take $a_2 = - a_1$ in what follows. Now setting  $\tilde{u} = u - \frac{\alpha_2}{6}$ and $\tilde s= s/\sqrt{2}$ we may transform
(\ref{redeqn}) into standard Weierstrass form with solution
\[
\tilde{u} = \wp((s - s_0)/\sqrt{2}; g_2, g_3),
\]
where $g_2 = \frac{\alpha_2^2}{3}$ and $g_3 = \frac{\alpha_2^3}{27} - 2 \beta^2$. Here we assume
$\Delta:=g_2^3 -27 g_3^2=4\beta^2(\alpha_2^3-27 \beta^2)\neq 0$  to avoid nonsingularity, commenting on the singular limits at the appropriate junctures. The
 $j$-invariant of the elliptic curve is as we have already seen
\[
j = 1728 \frac{g_2^3}{g_2^3 - 27 g_3^2} = \frac{16 \alpha_2^6}{\beta^2 (\alpha_2^3 - 27 \beta^2)} . 
\]

To be Nahm data we require that the Nahm matrices have a pole at $s=0$ which can be achieved by setting $s_0 = 0$. We can then express all the Flaschka variables as
\begin{align}\label{eqn: Flaschka variables in terms of wp}
    a_1 &= \pm\sqrt{\wp(s/\sqrt{2}; g_2, g_3) + \frac{\alpha_2}{6}}, &
    a_2 &= - a_1, &
    a_0 &= \frac{\beta}{a_1 a_2}, \\
    b_1 &= \pm\sqrt{2 a_1^2 + a_0^2 - \alpha_2}, &
    b_2 &= 0, &
    b_3 &= -b_1. 
\end{align}
We have some signs of the square roots to set above.
\begin{enumerate}[(i)]
    \item Using that, around $s=0$, $\wp(s/\sqrt{2}; g_2, g_3) \sim 2s^{-2} \Rightarrow a_1^2 \sim \frac{2}{s^2}$, we have $a_0 \sim \frac{\beta s^2}{2}$. The ODE for $a_0^\prime$, with $b_3 = -b_1$, gives 
    \[
    b_1 = - \frac{a_0^\prime}{a_0} \sim -\frac{(\beta s)}{(\beta s^2/2)} = -\frac{2}{s}.
    \]
    This requires us to take the negative square root for $b_1$ around $s=0$. We will want residues at $s=2$, and it will turn out by applying similar analysis that we need the positive root around $s=2$. These swap over when $b_1=0$, which corresponds to $a_1^\prime = 0$. As we see later this must happen at $s=1$. Alternatively one can see this from the observation that $a_1$ is even about $s=1$ by a judicious choice of period, and so $b_1 = \frac{2 a_1^\prime}{a_1}$ is odd about the same point.
    \item The sign of $a_1$ is a free choice, and does not affect the geometry of the monopole, hence in what follows below we always take the positive sign. 
\end{enumerate}

The corresponding Nahm matrices (\ref{nahmtoda})
have residues at $s=0$ given by 
\begin{align*}
    R_1 =\frac1{\sqrt{2}} \left(\begin{array}{rrr}
0 & 1 & 0 \\
-1 & 0 & -1 \\
0 & 1 & 0
\end{array}\right) , \quad
R_2 = \frac{i}{\sqrt{2}} \left(\begin{array}{rrr}
0 & -1 & 0 \\
-1 & 0 & 1 \\
0 & 1 & 0
\end{array}\right) , \quad
R_3 =i \left(\begin{array}{rrr}
1 & 0 & 0 \\
0 & 0 & 0 \\
0 & 0 & -1
\end{array}\right)
\end{align*}
which yield a 3-dimensional irreducible representation of $\mathfrak{su}(2)$. 

Next we require a simple pole at $s=2$ again forming a
3-dimensional irreducible representation.
There are two ways to achieve a residue at $s=2$:
\begin{enumerate}[(i)]
    \item have that $2/\sqrt{2} = \sqrt{2}$ is in the lattice corresponding to the values $g_2, g_3$, or
    \item have that around $s=2$, $\wp(s/\sqrt{2}; g_2, g_3) \sim -\frac{\alpha_2}{6} + \mathcal{O}(s-2)$.
\end{enumerate}
These correspond to having $a_1$ and $a_0$ be singular at $s=2$ respectively. (Because of the constant $\beta$ they cannot both be singular.) One can check that the second condition would give a reducible representation at $s=2$ (as again only one of the $a_i$ have a pole here) and so we discount it.

Focusing then on the first condition, one way to fix the real period of the associated lattice is to invert the $j$-invariant of the elliptic curve corresponding to $g_2, g_3$ to give the period $\tau$. Here this is most readily achieved by solving the quadratic (for example, see \cite{Berndt1995})
\begin{align*}
4\alpha (1-\alpha) &= \frac{1728}{j} = 108 (\beta^2/\alpha_2^3) \psquare{1 - 27(\beta^2/\alpha_2^3)} ,
\end{align*}
for which we see the two solutions are $\alpha = \frac{27 \beta^2}{\alpha_2^3}$, and $1-\alpha$. The corresponding normalised period is 
\[
\tau = \tau(\alpha) := i \frac{{}_2 F_1 (1/6 ,5/6, 1 ; 1-\alpha)}{{}_2 F_1 (1/6 ,5/6, 1 ; \alpha)}. 
\]
Some analytic properties of this function we need are given in Appendix \ref{proptau}.
\begin{remark}
    If we had taken the other root $\alpha$ in the numerator of the hypergeometric function then this would give the period $-1/\tau$. 
\end{remark}
As we want the lattice corresponding to $g_2, g_3$ to be $\sqrt{2}\mathbb{Z} + \sqrt{2}\tau \mathbb{Z}$, we get the transcendental equations 
\begin{align*}
    \frac{1}{3}\alpha_2^2 = \frac{1}{4} g_2(1, \tau), \quad
    \frac{1}{27} \alpha_2^3 - 2 \beta^2 = \frac{1}{8} g_3 \pround{1, \tau}.
\end{align*}
For any given value of $\alpha \in (0,1)$, let $\alpha_2^2 = \frac{3}{4} g_2(1, \tau)$. We then have two equations defining 
$\beta$:
\begin{align*}
    \beta^2 = \frac{\alpha \alpha_2^3}{27}, \quad \beta^2 = \frac{1}{2} \psquare{\frac{1}{27} \alpha_2^3 - \frac{1}{8} g_3(1, \tau)}.
\end{align*}
To have a valid solution we must have that the two equations
are consistent with each other, which one can check (see Appendix \ref{consistency}) is equivalent to $\operatorname{sgn}(g_3(1, \tau)) = \operatorname{sgn}(\alpha_2)\operatorname{sgn}(1-2\alpha)$. 
A consideration of the information given about $\tau$ and $g_3$ in Appendix \ref{sec: hypergeometric function proofs} tells us that we only get solutions in the region $\alpha \in [0,1]$, where $\alpha = 0, 1$ really correspond to the limits $\lim_{\epsilon \to 0^+} \epsilon, 1-\epsilon$ respectively. 

In order to exclude the possibility of other poles of the Nahm matrices in the region $s \in (0, 2)$, it is necessary that for all $s \in (0,2)$
\[
 \wp(s/\sqrt{2}; g_2, g_3) + \frac{\alpha_2}{6} > 0.
\]
We know that (i) $\wp$ takes its minimum at $s=1$; (ii) that the minimum value is the most-positive root of the corresponding cubic $4 \wp^3 - g_2 \wp - g_3 = 0$; (iii)
that this root is positive \cite[\S 23.5]{dlmf2022}. Therefore  
there are no other poles in $(0,2)$. Further as $\alpha_2\ne0$ has the same sign as $\alpha$, then $\alpha_2 > 0$ so $a_1^2 > 0$. Therefore we know that $a_1$ is always real, and hence so are all the Flaschka variables, thus giving all the Nahm variables being real as desired. 

The remaining condition required for valid Nahm data is that $T_i(s) = T_i(2-s)^T$. The nature of the Weierstrass $\wp$ is such that $\wp((2-s)/\sqrt{2}; g_2, g_2) = \wp(s/\sqrt{2}; g_2, g_3)$, so we automatically have that $a_1(2-s) = a_1(s)$, $a_0(s) = a_0(2-s)$. Moreover, because of the change in the sign of the square root giving $b_1$ at $s=1$, we have that $b_1(2-s) = - b_1(s)$. Taken together these ensure the desired symmetry of the Nahm matrices and we have a one-parameter family of
new solutions.

As such, we have now proven the following theorem.
\begin{theorem}\label{thm: D6 monopole spectral curves}
Given $\alpha \in [0,1]$, define 
        \[
        \tau = \tau(\alpha) = i \frac{{}_2 F_1 (1/6 ,5/6, 1 ; 1-\alpha)}{{}_2 F_1 (1/6 ,5/6, 1 ; \alpha)}. 
        \]
        Solving 
        \begin{align*}
    \frac{1}{3}\alpha_2^2 = \frac{1}{4} g_2(1, \tau), \quad \frac{1}{27} \alpha_2^3 - 2 \beta^2 = \frac{1}{8} g_3 \pround{1, \tau},
\end{align*}
with $\operatorname{sgn}(\alpha_2) = \operatorname{sgn}(\alpha)$ yields a monopole spectral curve with $D_{6}$ symmetry
\[
\eta^3 + \alpha_2 \eta \zeta^2 + \beta(\zeta^6 - 1) = 0. 
\]
Moreover, the Nahm data is given explicitly in terms of $\wp$-functions by (\ref{nahmtoda}) and (\ref{eqn: Flaschka variables in terms of wp}).
\end{theorem}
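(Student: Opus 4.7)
The plan is to assemble the ingredients developed earlier in the section into a single coherent argument and verify that the resulting Nahm triple satisfies all three of Nahm's conditions. I would begin by recalling that under the folding $a_1^2 = a_2^2$, $b_2 = 0$ (which forces $\alpha_3 = 0$ and is consistent with the $D_6$ symmetry by the argument of the previous subsection), the Toda system reduces to the single ODE \eqref{redeqn}. Setting $\tilde{u} = u - \alpha_2/6$ and rescaling $s \mapsto s/\sqrt{2}$ brings this to Weierstrass normal form with invariants $g_2 = \alpha_2^2/3$ and $g_3 = \alpha_2^3/27 - 2\beta^2$, whose general solution is $\tilde u(s)=\wp((s-s_0)/\sqrt2;g_2,g_3)$. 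Choosing $s_0 = 0$ places a pole at the left endpoint, and substituting the leading behaviour $\wp(s/\sqrt 2) \sim 2/s^2$ into \eqref{eqn: Flaschka variables in terms of wp} and \eqref{nahmtoda} gives the explicit residues $R_1, R_2, R_3$ displayed above, which are easily seen to satisfy $[R_i,R_j]=\epsilon_{ijk}R_k$ and to act irreducibly on $\mathbb C^3$.

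Next, to produce the required pole at $s=2$ I would observe there are only two possibilities: either $\sqrt{2}$ lies in the period lattice of $\wp$, or $\wp$ attains the value $-\alpha_2/6$ at $s=2$. The latter would force a pole in $a_0$ alone and gives a reducible representation (by inspecting which $a_i$ diverge), so it is discarded. For the former, the condition that the lattice be $\sqrt 2\,\mathbb Z + \sqrt 2\,\tau\mathbb Z$ gives the two transcendental equations in the theorem. Inverting the $j$-invariant as in the classical formula leads to the quadratic $4\alpha(1-\alpha) = 1728/j$ with roots $\alpha = 27\beta^2/\alpha_2^3$ and $1-\alpha$, and the standard hypergeometric formula for the period ratio yields $\tau(\alpha)$ as stated. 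The compatibility of the two expressions for $\beta^2$ is then a purely algebraic identity in the lattice invariants; I would defer this to Appendix \ref{consistency}, where it reduces to the sign condition $\sgn(g_3(1,\tau)) = \sgn(\alpha_2)\sgn(1-2\alpha)$, and note that this together with the analytic properties of $\tau$ and $g_3$ collected in Appendix \ref{sec: hypergeometric function proofs} restricts the parameter $\alpha$ to $[0,1]$.

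I would then verify the regularity condition on $(0,2)$: since $\wp$ is real and attains its minimum at the real half-period $s=1$, and since the largest real root of $4\wp^3 - g_2\wp - g_3$ is positive (by \cite[\S 23.5]{dlmf2022}), the quantity $\wp(s/\sqrt 2) + \alpha_2/6$ stays strictly positive on $(0,2)$ for $\alpha_2>0$. Hence $a_1$, and therefore every Flaschka variable, is real and pole-free on $(0,2)$, and the matrices \eqref{nahmtoda} are antihermitian there. Finally the reality condition $T_i(s) = T_i(2-s)^T$ follows from the reflection symmetry $\wp((2-s)/\sqrt 2) = \wp(s/\sqrt 2)$ about $s=1$: this makes $a_1$ and $a_0$ even about $s=1$, while the sign change in the square root defining $b_1$ across $s=1$ (forced by the pole structure already analysed) makes $b_1$ odd about that point, which is exactly the transposition symmetry of \eqref{nahmtoda}.

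The main obstacle is the consistency of the period-matching system: one must show that the single relation defining $\tau$ in terms of $\alpha$ is compatible with both scaled invariant equations simultaneously, rather than overdetermining $\alpha_2$ and $\beta$. I expect this to reduce, via the identity $4\alpha(1-\alpha) = 1728/j$ and the Eisenstein series expressions for $g_2,g_3$ at $(1,\tau)$, to the sign condition stated above; the careful bookkeeping of branches of square roots (for $\alpha_2$, $\beta$, and $b_1$) is where one has to be most vigilant, and the appendix is doing the genuine analytic work. Everything else is a verification that the explicit formulae \eqref{eqn: Flaschka variables in terms of wp} tick off Hitchin's three boundary/regularity requirements in turn.
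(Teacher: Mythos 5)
Your proposal follows the paper's own argument essentially step for step: the same folding to the single ODE \eqref{redeqn}, the same Weierstrass normalisation with $g_2=\alpha_2^2/3$, $g_3=\alpha_2^3/27-2\beta^2$, the same dichotomy for producing the pole at $s=2$ (with the $a_0$-pole case rejected for reducibility), the same $j$-invariant inversion via $4\alpha(1-\alpha)=1728/j$ with the consistency check deferred to Appendix~\ref{consistency}, and the same regularity and $T_i(s)=T_i(2-s)^T$ arguments. It is correct and matches the paper's route; the only cosmetic slip is the sign in the residue commutation relations (Nahm's equations force $\frac12\sum_{j,k}\epsilon_{ijk}[R_j,R_k]=-R_i$), which does not affect irreducibility or the argument.
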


\subsection{Distinguished Curves}

Having solved for general $\alpha\in [0,1]$ we now investigate the special values of $\alpha = 0, 1/2, 1$. 

\subsubsection{\secmath{\alpha = 0^+}}

The limit $\alpha \to 0$ corresponds to $\tau \to +i \infty$, and we have using the asymptotic expansion of the Eisenstein series that $g_2(1, \tau) \to \frac{4\pi^4}{3}$, $g_3(1, \tau) \to \frac{8 \pi^6}{27}$, so $\alpha=0$ is indeed a solution with $\beta=0$, $\alpha_2 = \pi^2$. This recreates the well known axially-symmetric monopole with spectral curve $\eta(\eta^2 + \pi^2\zeta^2) = 0$ \cite{Hitchin1982, Hitchin1983}.

If we had $\beta=0$ from the beginning (and so $\Delta=0$, and for $\alpha_2\ne0$ then $\alpha=0$),  we would have found a singular elliptic curve 
\[
4 \tilde{u}^3 - \frac{1}{3}\alpha_2^2 \tilde{u} - \frac{1}{27}\alpha_2^3 = 4 \pround{\tilde{u} + \frac{\alpha_2}{6}}^2 \pround{\tilde{u} - \frac{\alpha_2}{3}} , 
\]
with solution to the corresponding ODE (using known integrals) 
given by
\begin{align*}
\tilde{u} &= \frac{\alpha_2}{3} + \frac{\alpha_2}{2} \tan^2 \psquare{\frac{\sqrt{\alpha_2}}{2}(s - s_0)}, \quad
a_1 = \sqrt{\frac{\alpha_2}{2}} \sec \psquare{\frac{\sqrt{\alpha_2}}{2}(s - s_0)}.
\end{align*}
We could then manufacture the right residue at $s=0$ by having $s_0 = \frac{\pi}{2} \cdot \frac{2}{\sqrt{\alpha_2}}$. To get the correct periodicity, we would require that
$
\frac{\pi}{2} = \frac{\sqrt{\alpha_2}}{2} (2 - s_0)$ and consequently that $\alpha_2 = \pi^2$ again giving the
axially-symmetric monopole.

\subsubsection{\secmath{\alpha = 1^-}}

To get this limit, we use $\tau(1^-) = -1/\tau(0^+)$, so 
\[
g_2(1, \tau(1^-)) = g_2(1, -1/\tau(0^+)) = \tau(0^+)^4 g_2(1, \tau(0^+)) = \frac{1}{\tau(1^-)^4} \frac{4 \pi^4}{3},
\]
and likewise for $g_3$. Solving gives 
\[
\alpha_2 \sim -\pround{\frac{\pi}{\tau}}^2, \quad \beta \sim \pm \frac{i}{3 \sqrt{3}}\pround{\frac{\pi}{\tau}}^3,
\]
or equivalently writing $\tau = i \epsilon$ for $0 < \epsilon \ll 1$,
\[
\alpha_2 \sim 3\pround{\frac{\pi}{\sqrt{3}\epsilon}}^2, \quad \beta \sim \pm \pround{\frac{\pi}{\sqrt{3}\epsilon}}^3,
\]
The corresponding spectral curve thus factorises as 
\begin{align*}
0 &= \eta^3 + 3\pround{\frac{\mp \pi}{\sqrt{3}\epsilon}}^2\eta\zeta^3 - \pround{\frac{\mp \pi}{\sqrt{3}\epsilon}}^3(\zeta^6 - 1) , \\
&= \psquare{\eta - \pround{\frac{\mp \pi}{\sqrt{3}\epsilon}}(\zeta^2 - 1)} \psquare{\eta - \pround{\frac{\mp \pi}{\sqrt{3}\epsilon}}(\omega\zeta^2 - \omega^2)}\psquare{\eta - \pround{\frac{\mp \pi}{\sqrt{3}\epsilon}}(\omega^2\zeta^2 - \omega)}  .
\end{align*}
This corresponds to three well-separated 1-monopoles on the vertices of an equilateral triangle in the $x,y$-plane with side length $\frac{\pi}{\epsilon}$ \cite{Hitchin1995}. As $\epsilon$
tends to zero these three vertices tend to the point $\infty$, the singular degeneration to the cuspidal elliptic curve with $\Delta=0$ and $\alpha=1$. 

\subsubsection{\secmath{\alpha = 1/2}}
In this case $\tau=i$, and the lattice is the square lattice. The values of $g_2, g_3$ for this lattice are known explicitly \cite[23.5.8]{dlmf2022}, giving the equations
\begin{align*}
    \frac{1}{3}\alpha_2^2 = \frac{1}{4} \frac{\Gamma(1/4)^8}{16 \pi^2} , \quad
    \frac{1}{27} \alpha_2^3 - 2 \beta^2 = 0 \Rightarrow \alpha_2 = \frac{\sqrt{3}\Gamma(1/4)^4}{8 \pi}, \quad \beta = \pm \frac{\Gamma(1/4)^6}{32(\sqrt{3}\pi)^{3/2}}.
\end{align*}

\begin{remark}
    The coefficients seen here are the same, up to a sign, as those of a distinguished monopole found in \cite{Houghton1996c}. This is no accident, but arises because the square lattice is behind the distinguished ``twisted figure-of-eight" monopole, as we show later in \S\ref{sec: D4 monopoles}.
\end{remark}

\subsection{Scattering}

To complete our understanding of these monopoles we discuss the corresponding scattering. This has already been described using the rational map approach in \cite{Sutcliffe1997}. 
The $D_6$-symmetric monopoles described here corresponds to the
prismatic subgroup $D_{3h}$ of $\operatorname{O}(3)$: this confines the monopoles to lie in a plane, and thus any scattering observed must be planar. 
Note for each value of $\alpha\neq 0$ there are two choices of $\beta$ from the defining equations, and these two branches coalesce where $\beta =0 \Leftrightarrow \alpha = 0$. This gives us a view of scattering from $\alpha=1$ with three initially well-separated 1-monopoles with a choice of sign. They move inwards along the axes of symmetry of the corresponding equilateral triangle through $\alpha = 0$ where the 3-monopoles instantaneously takes the configuration of the axially-symmetric monopole. Here we change branch (i.e. sign of $\beta$), and move back out to $\alpha = 1$ where now because of the change of sign these three well-separated 1-monopoles are deflected by $\pi/3$ radians. Note that as with the planar scattering of 2-monopoles \cite{Atiyah1984}, because of symmetry one cannot associate a given in-going monopole with an out-going one but rather interpret the scattering process as the 3 in-going monopoles splitting into thirds which then recombine to form the out-going monopoles.

\section{\secmath{V_4} Monopoles}\label{V4 Monopoles}

\subsection{Solving for Nahm Data}

For our curve with $V_4$ symmetry the generators of the automorphism group are $(\zeta,\eta) \mapsto (-\zeta, -\eta)$ and $(\zeta, \eta) \mapsto (-1/\zeta, \eta/\zeta^2)$; equivalently these correspond to the rotations $\diag(-1,-1,1)$
and $\diag(-1,1,-1)$ whose product is the earlier $r$. If we impose further the involution $(\zeta, \eta) \mapsto (\zeta, -\eta)$ as a symmetry (the composition of inversion with the anti-holomorphic involution) we restrict to the case of the inversion-symmetric 3-monopoles known in \cite{Houghton1996b}. Here they solve for Nahm matrices given in terms of 3 real-valued functions $f_i$ satisfying $f_1^\prime = f_2 f_3$ (and cyclic), with the corresponding spectral curve being 
\[
\eta^3 + \eta \psquare{\pround{f_1^2 - f_2^2}(\zeta^4 + 1) + (2f_1^2 + 2f_2^2 - 4 f_3^3)\zeta^2} = 0.
\]
We find in Appendix \ref{sec: v4 nahm matrices} that the same procedure, now without imposing the extra symmetry, yields Nahm matrices in terms of 3 complex-valued functions satisfying\footnote{These equations are also found in \cite[(3.57)]{Houghton1997} where they are attributed to \cite{Armstrong1962}; they also appear as the $x$-independent solutions in the description of
$3$-wave scattering \cite[(17) p177]{Novikov1984}. We thank Pol Vanhaecke and Sasha Mikhailov for this latter reference.} 
\begin{equation}\label{complexeuler}
\bar{f_1}^\prime = f_2 f_3 \quad \text{(and cyclic)},
\end{equation} with the corresponding spectral curve being 
\begin{equation}  \label{genv4}
 \eta^3 + \eta \psquare{a(\zeta^4 + 1) + b \zeta^2} + c\zeta (\zeta^4 - 1) = 0,
\end{equation}
where
$$ a=\abs{f_1}^2 - \abs{f_2}^2,\quad
b= 2\abs{f_1}^2 + 2\abs{f_2}^2 - 4 \abs{f_3}^3,\quad
c= 2(f_1 f_2 f_3 - \bar{f_1} \bar{f_2} \bar{f_3}).$$
The Nahm matrices are given by
\begin{align}\label{eqn: V4 Nahm matrices}
    T_1 &= \begin{pmatrix}
        0 & 0 & 0 \\ 0 & 0 & -\bar{f_1} \\ 0 & f_1 & 0 
    \end{pmatrix}, &
    T_2 &= \begin{pmatrix}
        0 & 0 & f_2 \\ 0 & 0 & 0 \\ -\bar{f_2} & 0 & 0 
    \end{pmatrix}, &
    T_3 &= \begin{pmatrix}
        0 & -\bar{f_3} & 0 \\ f_3 & 0 & 0 \\ 0 & 0 & 0
    \end{pmatrix}.
\end{align}

\begin{remark}
    We observe that equations (\ref{complexeuler}) come from the Poisson structure $\pbrace{f_i, \bar{f_j}} = \delta_{ij}$, with Hamiltonian $c/2= f_1 f_2 f_3 - \bar{f_1} \bar{f_2} \bar{f_3}$. This complex extension of the Euler equations is integrable.
\end{remark}

\begin{remark}
    We have not fully used up the gauge symmetry available to us. Namely, if we conjugate the $T_i$ by $U = \operatorname{diag}(u_1, u_2, u_3)$ where $u_j = e^{i \phi_j}$ and $\sum \phi_j = 0$, we get 
    \[
f_1 \mapsto u_3 u_2^{-1} f_1, \quad f_2 \mapsto u_1 u_3^{-1} f_2, \quad f_3 \mapsto u_2 u_1^{-1} f_3,
    \]
    which preserves the form of the equations. 
\end{remark}

A consequence of this remark and the form (\ref{eqn: V4 Nahm matrices}) is that for the $T_i$ to have residues which form an irreducible representation of $\mathfrak{su}(2)$ it is sufficient for the $f_i$ to have simple poles at $s=0,2$.

In order to find a solution we note that 
$a_{ij} = \abs{f_i}^2 - \abs{f_j}^2$ and $c = 2(f_1 f_2 f_3 - \bar{f_1} \bar{f_2} \bar{f_3}) $ are now constants. 
As $c$ is imaginary it will be useful to introduce $\tilde{c}:=-ic$. Setting $F = |f_1|$ we have 
\begin{align*}
    (F^\prime)^2 &= \pbrace{ \psquare{ \pround{f_1 \bar{f_1}}^{1/2}}^\prime}^2 
    = \pbrace{ \frac{1}{2} (f_1 \bar{f_1})^\prime (f_1 \bar{f_1})^{-1/2}}^2 
    = \frac{1}{4} (f_1 f_2 f_3 + \bar{f_1} \bar{f_2} \bar{f_3})^2 F^{-2}, \\
    &= \frac{1}{4} F^{-2} \psquare{(c/2)^2 + 4 |{f_1}|^2 |{f_2}|^2 |{f_3}|^2 } , \\
    &= \frac{1}{4} F^{-2} \psquare{(c/2)^2 + 4 F^2 (F^2 - a_{12}) (F^2 + a_{31})},
\end{align*}
{and so with $G = F^2$ we get} 
\[
(G^\prime)^2 = \frac{1}{4}c^2 + 4G(G - a_{12})(G + a_{31}) ,
\]
which then has solutions in terms of elliptic functions. In terms of the coefficients of the spectral curve we already have $a_{12} = a$, and we can moreover find $a_{31} = \frac{-1}{4}(b  +2a)$, so we can rewrite the equation as 
\begin{align}\label{eqn: ODE for |F|^2}
(\tilde{G}^\prime)^2 = 4 \tilde{G}^3 - g_2 \tilde{G} - g_3 ,
\end{align}
where $\tilde{G} = G - \frac{b + 6a}{12}$, $g_2 = a^2 + \frac{b^2}{12}$, and $g_3 = \frac{b(b^2 - 36a^2)}{216} + \frac{1}{4}\tilde{c}^2$.  Then $\tilde G=\wp$, the Weierstrass $\wp$-function. The $j$-invariant for this elliptic curve is
\begin{align}\label{jinvgenv4}
    j &= 1728 \frac{g_2^3}{g_2^3 - 27 g_3^3} = \frac{(12a^2 + b^2)^3}{\pround{a^6 - \frac{1}{2}a^4 b^2 + \frac{1}{16}a^2 b^4 + \frac{9}{4}a^2 b \tilde{c}^2 - \frac{1}{16}b^3 \tilde{c}^2 - \frac{27}{16}\tilde{c}^4}},
\end{align}
which is precisely that of the quotient of (\ref{genv4}) by the $V_4$ symmetry. We also note that the pull-back of the invariant
differential of this quotient is exactly that needed when discussing the Ercolani-Sinha constraint.

Before going on to solve this completely, let's recall what remains to be shown to get a monopole spectral curve (i.e. to have our Nahm matrices satisfy all the conditions to give Nahm data). We need to have that the $\wp$-function associated to the above elliptic curve has real period 2, but we will be able to impose this by tuning the coefficients. Also, as the right-hand side
of 
\begin{equation}\label{v4wpf1}
   \wp=|f_1|^2-\frac{b+6a}{12} 
\end{equation}
is always real this requires $\wp$ to be real and so to be taken on a rectangular or rhombic lattice. Also for reality we need that 
\[
G(s) = \wp(s) + \frac{b+6a}{12}, \quad G(s)-a_{12} = \wp(s) + \frac{b-6a}{12}, \quad G(s)+a_{31} = \wp(s) - \frac{b}{6}
\]
are always positive. Once we have achieved this we will have regularity in the region $(0,2)$, and so get the right pole structure. The final condition is symmetry about $s=1$, which is enforced on the $\abs{f_j}$ (because $\abs{f_j} \sim \sqrt{\wp}$), and so the remaining Nahm constraint $T_j(s) = T_j(2-s)^T$ becomes simply $f_j(s) = - \bar{f_j}(2-s)$: that is we require $\arg f_j(s) = \pm\pi - \arg f_j(2-s)$.

Indeed writing $f_j = \abs{f_j} e^{i \theta_j}$ we can work out the equations for the angles, using 
\begin{align}\label{inttheta}
    f_j^\prime = \pround{\abs{f_j}^\prime + i \theta_j^\prime \abs{f_j}} e^{i \theta_j} = \pround{\frac{\abs{f_j}^\prime}{\abs{f_j}} + i \theta_j^\prime } f_j \Rightarrow \theta_j^\prime = \frac{1}{i}\psquare{\frac{f_j^\prime}{f_j} - \frac{\abs{f_j}^\prime}{\abs{f_j}}} = \frac{-\tilde{c}}{4 \abs{f_j}^2} .
\end{align}
The $\theta_j$ are thus strictly monotonic (unless $\tilde{c}=0$, in which case they are constant), and symmetry about $s=1$ of $\abs{f_j}$ then necessitates that $\theta_j(s) - \theta_j(1)$ is antisymmetric about $s=1$.

We also have that 
\[
\tilde{c} = 4 \abs{f_1} \abs{f_2} \abs{f_3} \sin(\theta_1 + \theta_2 + \theta_3) = \sqrt{\tilde{c}^2 + 4(G^\prime)^2} \sin(\theta_1 + \theta_2 + \theta_3) . 
\]
At $s=1$ where $G^\prime(s)=0$ we need $\sin(\theta_1 + \theta_2 + \theta_3) = 1$, and by our gauge freedom we can choose $\theta_1(1)=\pi/2=\theta_2(1)$ and so $\theta_3(1) = -\pi/2$, thus enforcing our  condition of symmetry about $s=1$. We then see that the anti-symmetry of $\theta_j(s)-\theta_j(1)$ about $s=1$ enforces the remaining reality condition. We also note that as $|f_j(s)|^2=\wp(s)-c_j:=\wp(u)-\wp(v_j)$ for appropriate $s$
and $v_j = \int_{\infty}^{c_j} \psquare{4 u^3 - g_2 u - g_3}^{-1/2} du$ we have \cite[(6.14.6)]{Lawden1989}
\begin{equation}\label{intwp}
\int \frac{du}{\wp(u)-\wp(v)}=
\frac1{\wp'(v)}\left[ 2 u \zeta(v) +\ln\frac{\sigma(u-v)}{\sigma(u+v)}\right],
\end{equation}
allowing us to find the $\theta_j(s)$ explicitly which is done in (\ref{thetaint}) in Appendix \ref{sec: thetaint}.

It remains to fix the real period of the corresponding elliptic curve. 
We describe two methods. The first makes use of the Jacobi elliptic functions to express the lattice invariants in terms of complete elliptic functions \cite[\S 18.9]{Abramowitz1972}.
We explain this in Appendix \ref{sec: restrictions} in which,
by showing that we may fix the real period, establishes
the following theorem. 

\begin{theorem}\label{thm: V4 monopole spectral curves}
    Given $\alpha \in \mathbb{R}$, $m \in [0,1]$, and $\sgn = \pm1$, define $g_2, g_3$ by $g_2 = 12 \pround{K(m)^2/3}^2 q_1(m)$, $g_3 = 4\pround{{K(m)^2}/{3}}^3(2m-1)q_2(m)$, where 
    \[
    q_1(m) = \left \lbrace \begin{array}{cc} 1 - m + m^2 & \sgn=1, \\ 1 - 16 m + 16 m^2 & \sgn=-1,\end{array} \right. \quad q_2(m)  = \left \lbrace \begin{array}{cc} (m-2)(m+1) & \sgn=1, \\ 2(32m^2 - 32m - 1) & \sgn=-1.\end{array} \right.
    \]
    If $m$ is such that $g_2>0$ and the polynomial $(4-2\alpha)x^3 - g_2 x - g_3$ has a real root $x_\ast$ with $\abs{x_\ast}<\sqrt{g_2/3}$ and  $\operatorname{sgn}(x_\ast) = -\operatorname{sgn}(\alpha)$, then we may solve 
    \[
    a^2 + \frac{b^2}{12} = g_2, \quad \frac{b(b^2 - 36a^2)}{216} + \frac{\tilde{c}^2}{4} = g_3
    \]
    for $a, b, \tilde{c} \in \mathbb{R}$ taking $\alpha = \frac{-27 \tilde{c}^2}{b^3}$. Then 
    \[
    \eta^3 + \eta\psquare{a(\zeta^4 + 1) + b\zeta^2} + i \tilde{c}\zeta (\zeta^4-1) = 0
    \]
    is a monopole spectral curve with $V_4$ symmetry. Moreover the Nahm data is given explicitly in terms of elliptic functions by (\ref{eqn: V4 Nahm matrices}), (\ref{v4wpf1}) and (\ref{thetaint}). 
\end{theorem}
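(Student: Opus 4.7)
The goal is to verify that the ansatz \eqref{eqn: V4 Nahm matrices}, built from three complex functions $f_j$ satisfying the complex Euler system \eqref{complexeuler} derived in Appendix \ref{sec: v4 nahm matrices}, furnishes genuine Nahm data once its three free parameters $a, b, \tilde c$ are tuned. The reduction to the Weierstrass equation \eqref{eqn: ODE for |F|^2} with invariants $(g_2, g_3)$ polynomial in $(a,b,\tilde c)$ is already in hand, so what remains is (i) to pin down $(g_2, g_3)$ so that the associated elliptic curve has real period $2$ and real-valued $\wp$-function, (ii) to solve the two resulting polynomial relations for real $(a,b,\tilde c)$, and (iii) to verify regularity on $(0,2)$, the required simple-pole structure at $s=0,2$, and the anti-transpose symmetry $T_j(s) = T_j(2-s)^T$.

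Step (i) is the transcendental heart of the argument and I would relegate the detailed bookkeeping to Appendix \ref{sec: restrictions}. Requiring $\wp$ to be real forces the lattice to be either rectangular ($\Delta>0$) or rhombic ($\Delta<0$); the two branches of $\operatorname{sgn} = \pm 1$ in the theorem distinguish these. In each case the three stationary values $e_j$ of $\wp$ can be written in closed form in terms of the Jacobi modulus $m \in [0,1]$, and imposing real period $2$ amounts to the scaling $e_j \propto K(m)^2/3$. Substituting into $g_2 = 2\sum e_j^2$ and $g_3 = 4 e_1 e_2 e_3$ then reproduces the explicit polynomials $q_1(m), q_2(m)$ of the theorem.

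For step (ii), with $(g_2, g_3)$ now functions of $(m,\operatorname{sgn})$, introducing $\alpha = -27\tilde c^2/b^3$ eliminates $\tilde c^2$ and substituting $a^2 = g_2 - b^2/12$ collapses the system to the cubic $(4-2\alpha)(b/6)^3 - g_2(b/6) - g_3 = 0$. Real roots $x_\ast = b/6$ of this cubic yield real $b$; the bound $|x_\ast|<\sqrt{g_2/3}$ is precisely $a^2>0$, and $\operatorname{sgn}(x_\ast) = -\operatorname{sgn}(\alpha)$ is precisely $\tilde c^2 \ge 0$, so together the two hypotheses guarantee real $(a,b,\tilde c)$ as claimed.

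Step (iii) proceeds along the lines already indicated in the text. Regularity on $(0,2)$ reduces to positivity of $\wp(s)$ after three different constant shifts relevant to $|f_j|^2$; since $\wp$ attains its minimum over the real period at $s=1$ and this minimum equals the largest real root of $4y^3 - g_2 y - g_3$, the hypotheses on $x_\ast$ give positivity directly. The symmetry $|f_j(s)| = |f_j(2-s)|$ is automatic from $\wp(2-s)=\wp(s)$, and for the phases I would integrate \eqref{inttheta} explicitly using \eqref{intwp}: since $\theta_j' = -\tilde c/(4|f_j|^2)$ is even about $s=1$, the gauge choice $\theta_1(1)=\theta_2(1)=\pi/2$, $\theta_3(1) = -\pi/2$, compatible with $\sin(\theta_1+\theta_2+\theta_3) = 1$ at the critical point $s=1$, makes $\theta_j(s)-\theta_j(1)$ odd about $s=1$, which is exactly the required $f_j(s) = -\overline{f_j(2-s)}$. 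The simple poles of $f_j$ at $s=0,2$ are inherited from $\wp$, and the off-diagonal form of the $T_j$ ensures their residues span an irreducible three-dimensional representation of $\mathfrak{su}(2)$. The main obstacle in the whole scheme is step (i): confirming that the case distinction between rectangular and rhombic lattices yields exactly the polynomial identities defining $q_1$ and $q_2$, since everything downstream is a routine consequence.
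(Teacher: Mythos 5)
Your proposal follows essentially the same route as the paper: fixing the real period $2$ via the rectangular/rhombic dichotomy and the standard $K(m)$ expressions for the lattice invariants (the paper simply cites \cite[\S 18.9]{Abramowitz1972} for what you derive from $g_2=2\sum e_j^2$, $g_3=4e_1e_2e_3$), collapsing the two invariant relations to the cubic $(4-2\alpha)\tilde b^3-g_2\tilde b-g_3=0$ with the same reading of the two hypotheses as $a^2>0$ and $\tilde c^2\ge 0$, and verifying the pole structure and $T_j(s)=T_j(2-s)^T$ via evenness of $|f_j|$ and oddness of $\theta_j(s)-\theta_j(1)$ about $s=1$. The only place you are slightly more optimistic than warranted is asserting that positivity of the three shifted $\wp$'s follows "directly" from the hypotheses on $x_\ast$ — one still needs to compare the largest root $e_1$ of $4t^3-g_2t-g_3$ with the roots $c_j$ of the cubic shifted by $\tilde c^2/4$ — but the paper is equally terse on this point.
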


A second approach to fixing the correct real period to give Nahm data is to invert the $j$-invariant (\ref{jinvgenv4})
as done in the earlier $D_6$ case. Though we are unable to invert in terms of a single rational $\alpha$ as with the $D_6$-symmetric monopole, we may use \cite[(4)]{Duke2008} which gives
\[
\tau = i\psquare{\frac{2\sqrt{\pi}}{\Gamma(7/12)\Gamma(11/12)} \frac{{}_2 F_1 (1/12 ,5/12, 1/2 ; x)}{{}_2 F_1 (1/12 ,5/12, 1 ; 1-x)} - 1},
\]
where $x = 1 - \frac{1728}{j} = \frac{(1 - 2\alpha - 3\gamma)^2}{(1+\gamma)^3}$, with $\alpha = - \frac{27 \tilde{c}^2}{b^3}$, $\gamma = \frac{12a^2}{b^2}$. One may then fix the real period of the lattice, which will give solutions consistent with the definition of $x$ for some range of the parameters $\alpha, \gamma$. We investigate one particular restriction of this kind in \S\ref{sec: D4 monopoles}. We remark that \cite{Houghton1997} solved the associated Nahm data only for the (one-parameter) case $\Delta=0$ in which the elliptic curve degenerates and has trigonometric solutions.

\subsection{\secmath{D_4} Monopoles}\label{sec: D4 monopoles}
In \cite{Houghton1996c} a subfamily of (\ref{genv4}) with $D_4$ symmetry was studied. To the existing $V_4$ symmetries is appended the order-$4$ element $(\zeta,\eta)\mapsto(i\zeta, -i\eta)$ (corresponding to the composition of inversion with a rotation of $\pi/2$ in the $xy$-plane). This symmetry then requires $a=0$. 
By a dimension argument we expect the $j$-invariant inversion to yield a geodesic 1-parameter family for the enlarged symmetry group, and this was the case considered in \cite{Houghton1996c}
where the $C_4$ quotient yields an elliptic curve. Placing this curve in our $V_4$ family allows us a different approach to this family of curves. The restriction $a=0$ means that $\frac{1728}{j} = 4\alpha(1-\alpha)$ with $\alpha = - \frac{27 \tilde{c}^2}{b^3}$
and we can then fix the real period via the same approach as for the $D_6$ monopole. The equations we get are 
\begin{align*}
    \frac{b^2}{3} = \frac{1}{4}g_2(1, \tau), \quad 
    \frac{b^3}{27} + 2\tilde{c}^2 = \frac{1}{8} g_3(1, \tau),
\end{align*}
with these being consistent with the definition of $\alpha$ provided $\operatorname{sgn}(g_3(1, \tau)) = \operatorname{sgn}(b)\operatorname{sgn}(1-2\alpha)$. To also have that $\tilde{c}$ is real, we must have $\operatorname{sgn}(b) = - \operatorname{sgn}(\alpha)$ and hence our consistency condition is $\operatorname{sgn}(g_3(1, \tau)) = -\operatorname{sgn}(\alpha)\operatorname{sgn}(1-2\alpha)$. We thus have solutions in the region $\alpha \in (0,1/2)$ if $\operatorname{sgn}(g_3(1, \tau)) <0$, which requires $\tau = -1/\tau(\alpha)$. We can extend this to $\alpha \in (1/2, 1)$ still taking $\tau = -1/\tau(\alpha)$. Moreover, for $\alpha < 0$, we require $g_3(1, \tau) >0$, which can be achieved taking $\tau = \tau(\alpha)$. Finally, for $\alpha>0$, we require $\operatorname{sgn}(g_1(1, \tau)) >0$, achievable with $\tau = -1/\tau(\alpha)$. As such the parameter region in this case is the whole of $\mathbb{R}$. A case-by-case consideration shows that $G$, $G - a_{12}$, $G + a_{31}$ are always positive on the interval $[0,2]$, so we do indeed get Nahm data as desired.

As with the $D_6$-symmetric monopoles we may identify 
 special values of $\alpha$ and the curves they give. A similar analysis gives those found in \cite{Houghton1996c}, namely
\begin{itemize}
    \item $\alpha=\pm\infty$ gives the tetrahedrally-symmetric monopole, 
    \item $\alpha = 0^+, 0^-$ gives three well-separated 1-monopoles and the axially-symmetric monopole respectively,
    \item $\alpha=1/2$ gives the ``twisted figure-of-eight" monopole. Note $\alpha=1/2$ corresponds to the square lattice we saw as distinguished for the $D_6$ monopole.
\end{itemize}
We additionally see the curve with $\alpha=1$ as distinguished in our parametrisation, which gives the curve
\[
\eta^3 - \pi^2 \eta \zeta^2 \pm \frac{i}{\sqrt{27}} \pi^3 \zeta(\zeta^4-1) = 0. 
\]
In terms of the parameters $a, \epsilon$ of \cite{Houghton1996c}, this curve is given by $a = 2\sqrt{2}$, $\epsilon=-1$.

\subsubsection{Scattering}
As such we can now understand our scattering as starting at $\alpha=0^+$ with three well-separated 1-monopoles. As $\alpha$ increases to $\infty$ we have to pick a choice of $\tilde{c}$ continuously (though there is no specific choice at $\alpha = 0^+$ as the map $\zeta \mapsto -\zeta$ which swaps the choice of $\tilde{c}$ is a symmetry of our well separated configuration), and we pass through two distinguished curves, arriving at the tetrahedrally-symmetric monopoles in one orientation. We match that to $\alpha = -\infty$ taking the tetrahedrally-symmetric monopole with the same orientation there, allowing $\alpha$ to then increase up to $0^-$ where it takes the configuration of the axially-symmetric monopole. Here the two branches of $\tilde{c}$ coalesce, we change branch and do the process in reverse.

\section{Conclusion}\label{Conclusion}

In this letter we have begun systematising the classification of charge-3 monopole spectral curves with automorphisms, providing an exhaustive list of candidate curves; we nevertheless expect this list to contain curves that do not correspond to monopole spectral curves. We have also identified how one may use group theory to identify the subset of these candidates that quotient to an elliptic curve. This was done because such curves are amenable to the construction of Nahm matrices in terms of elliptic functions
 using the procedure of \cite{Hitchin1995}. Here 
the imposition of Hitchin's conditions (or equivalently those of Ercolani-Sinha) reduces to questions about the real periods of elliptic functions. Having provided new candidate spectral curves we solved for the Nahm data in two new cases, those of $D_6$ and $V_4$ symmetry.
The latter led us to an integrable system (\ref{complexeuler}) that may be viewed as the complexified Euler equations.
Given Nahm matrices and the corresponding group action what is not yet clear is how to methodically extract from the resulting coupled ODE's the relevant elliptic equations; providing such an understanding would simplify the construction of the solutions to Nahm's equations from the spectral data. This is the reason for our not treating the $C_2$-symmetric monopoles here: in this case we have $13$ coupled ODE's with $7$ conserved quantities.

One can generalise to higher charge several of the viewpoints put forward in this paper. We have seen that compactifying mini-twistor space in $\mathbb{P}\sp{1,1,2}$ and then looking at its image in $\mathbb{P}^3$ a possible charge-$k$ spectral curve is represented by the intersection of the cone and a degree-$k$
hypersurface. There may be value in this viewpoint for providing a candidate list of monopole spectral curves in higher charge. Further, the methods used to calculate the group-signature pairs giving elliptic quotients in genus 4 extends to higher genus, and so may be used to provide candidate spectral curves potentially amenable to solutions in terms of elliptic functions at higher charges. At present, this data has not been computed in the LMFDB, and so a first step would be the tabulation of those results. In the event that such a computation produced too extensive a list we suggest restricting to the case where $\delta (g, G, c)=1, 2$, for which we expect any corresponding monopole spectral curves to be either isolated points in the moduli space or to correspond to geodesic motion respectively, as we conjectured.

Finally, the geometry we introduced here may have applications for the understanding of spectral curves of hyperbolic monopoles. Spectral curves corresponding to hyperbolic monopoles live in the mini-twistor space of hyperbolic space, which is isomorphic to $\mathbb{P}^1 \times \mathbb{P}^1$, and specifically charge-$k$ hyperbolic monopoles are bidegree-$(k,k)$ curves in this surface \cite{Atiyah1984}. As $\mathbb{P}^1 \times \mathbb{P}^1$ is isomorphic to the non-singular quadric in $\mathbb{P}^3$, and bidegree-$(3,3)$ curves in this correspond to the other class of non-hyperelliptic curves classified by Wiman, our work  highlights the potential of classifying certain hyperbolic monopole spectral curves.

\appendix

\section{Hypergeometric Functions and Lattice Invariants}\label{sec: hypergeometric function proofs}

We gather here some of properties of elliptic and related functions used in the text and prove those statements noted in the text. We follow the conventions of \cite[Chapter 23]{dlmf2022}. First we recall the Weierstrass-$\wp$ function is defined by
$$\wp'\sp2=4\wp^3-g_2\wp-g_3=4(\wp-e_1)(\wp-e_2)(\wp-e_3).$$
Here $g_k=g_k(\omega,\omega^\prime)$ are defined by the lattice
$\Lambda:=2\omega\mathbb{Z}+2\omega^\prime\mathbb{Z}$. Let $\tau=\omega^\prime/\omega$.
We have 
 \begin{enumerate}[(i)]
    \item $g_k(\lambda \omega, \lambda \omega^\prime) = \lambda^{-2k} g_k(\omega, \omega^\prime)$, 
     \item given $\begin{psmallmatrix} a & b \\ c & d\end{psmallmatrix} \in \SL(2, \mathbb{Z})$, $g_k(1, (a\tau+b)(c \tau + d)^{-1}) = (c \tau + d)^{2k} g_k(1, \tau)$, 
     \item $\lim_{\operatorname{Im}\tau \to \infty} g_2(1, \tau) = \frac{4 \pi^4}{3}$, $\lim_{\operatorname{Im}\tau \to \infty} g_3(1, \tau) = \frac{8 \pi^6}{27}$. 
     \item When $\tau = i$, $g_2(1, \tau) = \frac{\Gamma(1/4)^8}{256 \pi^2}$, $g_3(1, \tau) = 0$.
    \item When $\tau = e^{2\pi i/3}$, $g_2(1, \tau) = 0$, $g_3(1, \tau) = \frac{\Gamma(1/3)^{18}}{(2\pi)^6}$.
 \end{enumerate}

Our parametrization of the Nahm matrices requires us to know the reality properties of the $\wp$-function. We have
\cite[Theorem 3.16.2]{Jones1987}
\[
g_2, g_3 \in \mathbb{R} \Leftrightarrow \forall z \in \mathbb{C}, \  \wp(\bar{z}; g_2, g_3) = \overline{\wp(z; g_2, g_3)} \Leftrightarrow \Lambda = \overline{\Lambda}.\]
Lattices for which $\Lambda = \overline{\Lambda}$ are called real lattices and they fall into two classes; rectangular lattices ($\omega \in \mathbb{R}$, $\omega^\prime \in i\mathbb{R}$), and rhombic lattices ($\overline{\omega} = \omega^\prime$). The rhombic lattices correspond to $\tau$ being on the boundary of the fundamental domain of the $\SL(2, \mathbb{Z})$ action on the upper half plane while the rectangular
lattices correspond to $\tau$ on the imaginary axis with $\Im(\tau)\ge1$. When restricted to rectangular or rhombic lattices  we can say more about the values of $g_2(1, \tau)$ and $g_3(1, \tau)$. This is done by relating the $g_k$ to the roots $e_i$ of the corresponding cubic equation by 
 \[
 g_2(1, \tau) = 2(e_1^2 + e_2^2 + e_3^2), \quad g_3(1, \tau) = -4 e_1 e_2 e_3. 
 \]
 \begin{enumerate}[(i)]
     \item On a rectangular lattice have $e_i \in \mathbb{R}$ so $g_2 > 0$; further, $g_3 > 0$ if $\abs{\tau} > 1$, $g_3< 0$ if $\abs{\tau}<1$.
     \item On a rhombic lattice, $e_1 \in \mathbb{R}$, $e_2 = \bar{e}_3$, and $\operatorname{sgn}(e_1) = \operatorname{sgn}(g_3)$. 
 \end{enumerate}

\subsection{Properties of \secmath{\tau(\alpha)}}\label{proptau}

In order to use the process of $j$-invariant inversion to impose the correct periodicity constraints and to give the 
limiting behaviours noted in the text  enabling us to find certain distinguished monopoles we require the properties of 
\[
\tau = \tau(\alpha) = i \frac{{}_2 F_1 (1/6 ,5/6, 1 ; 1-\alpha)}{{}_2 F_1 (1/6 ,5/6, 1 ; \alpha)}. 
\]
This is multi-valued when $\alpha < 0$ \cite[15.2.3]{dlmf2022}, with a principal branch $\tau_p$ and second branch $\tau_p + 1$, but for our purposes this difference will not not be important. The specific properties we require are that 
\begin{enumerate}[(i)]
    \item $\forall \alpha \in (0,1),$ $\tau(\alpha) \in i \mathbb{R}_{>0}$, 
    \item $\tau( 0^+) = +i\infty$, $\tau(1/2)=i$, $\tau(1^-) = 0$,
    \item $\forall \alpha < 0$, $\operatorname{Re}(\tau(\alpha)) \equiv 1/2$ mod 1,
    \item $\tau(-\infty) = e^{2\pi i/3}$, $\tau(0^-) = \frac{1}{2} + i \infty$. 
\end{enumerate}
Evaluated at the specific $\tau(\alpha)$ above we find that 
\[
\operatorname{sgn}(g_3(1, \tau(\alpha))) = \left \lbrace \begin{array}{cc}
    1 & \alpha < 1/2,  \\
    -1 & \alpha \in (1/2, 1). 
\end{array}\right . 
\]
Here we provide the necessary definitions and proofs. 

We can understand the behaviour of $\tau$ using known results about hypergeometric functions (see for example \cite[\S 15]{Abramowitz1972}). First, in the region $\alpha \in (0,1)$, we may use the series expression for ${}_2 F_1 (a, b, c; z)$ when $\abs{z} < 1$:
\[
{}_2 F_1 (a, b, c; z) = \sum_{n=0}^\infty \frac{(a)_n (b)_n}{(c)_n} \frac{z^n}{n!},
\]
where $(a)_n$ is the rising Pochhammer symbol 
\[
(a)_n = \left \lbrace \begin{array}{cc}
1     & n=0, \\
a(a+1)\dots(a+n-1) & n \geq 1. 
\end{array}\right . 
\]
This means we have (i) that for all $\alpha \in (0,1)$, $\tau(\alpha) \in i\mathbb{R}_{>0}$. This is important as it makes the lattice rectangular, which forces the Weierstrass $\wp$-function to be real on the real axis \cite[\S 23.5]{dlmf2022}. Moreover, as ${}_2 F_1 (a, b, c; z)$ is increasing in $z\in(0,1)$, $\operatorname{Im}\tau(\alpha)$ is strictly decreasing in $\alpha$. We can calculate the limits to be 
\[
\tau( 0^+) = +i\infty, \quad \tau(1^-) = 0,
\]
so giving (ii).
We may use \cite[15.3.10]{Abramowitz1972} which says that when $\abs{1-z}<1,$ $\abs{\arg(1-z)} < \pi$, 
\[
{}_2 F_1 (a, b, a+b; z) =
\frac{\Gamma(a+b)}{\Gamma(a) \Gamma(b)} \sum_{n=0}^\infty \frac{(a)_n(b)_n}{(n!)^2} \psquare{2 \psi(n+1) - \psi(a+n) - \psi(n+b) - \log(1-z)}(1-z)^n,
\]
where $\psi$ is the digamma function, to understand exactly this limiting behaviour, namely that the divergence is logarithmic. 
We can also highlight a special value in this regions, namely $ \tau(1/2) = i$.  

For $ \alpha \not \in [0,1]$ we no longer have that $\tau$ lies on the imaginary axis, and we would thus need to get a rhombic lattice (that is $\operatorname{Re}\tau = 1/2$) for the reality of $\wp$. Numerical tests suggest that while this happens for $\alpha < 0$ for $\alpha > 1$ we instead get $\operatorname{Re}(-1/\tau) = 1/2$. Indeed we may use \cite[15.10.29]{dlmf2022} to say\footnote{We are very grateful to Adri Olde Daalhuis for this argument.} 
\begin{align*}
{}_2 F_1 (1/6 ,5/6, 1 ; 1-\alpha) &= e^{5 \pi i/6} \frac{\Gamma(1) \Gamma(1/6)}{\Gamma(1) \Gamma(1/6)}{}_2 F_1 (1/6 ,5/6, 1 ; \alpha) \\
&\phantom{=} + e^{-\pi i/6} \frac{\Gamma(1) \Gamma(1/6)}{\Gamma(5/6)\Gamma(1/3)} 
\alpha^{-1/6}  {}_2 F_1 (1/6 ,1/6, 1/3 ; 1/\alpha), \\
&= e^{5 \pi i/6}{}_2 F_1 (1/6 ,5/6, 1 ; \alpha) \\&\phantom{=} + (-\alpha)^{-1/6} \frac{ \Gamma(1/6)}{\Gamma(5/6)\Gamma(1/3)} {}_2 F_1 (1/6 ,1/6, 1/3 ; 1/\alpha),
\end{align*}
and hence when $\alpha < 0$ (and taking the principal branch of the hypergeometric function) we get 
\[
\tau(\alpha) = i \psquare{e^{5\pi i/6} + T(\alpha)}
\]
with 
\[
T(\alpha) = (-\alpha)^{-1/6} \frac{ \Gamma(1/6)}{\Gamma(5/6)\Gamma(1/3)} \frac{{}_2 F_1 (1/6 ,1/6, 1/3 ; 1/\alpha)}{{}_2 F_1 (1/6 ,5/6, 1 ; \alpha)} \in \mathbb{R}. 
\]
This means $\operatorname{Re}(\tau(\alpha)) \equiv 1/2 \mod 1$,
which yields (iii).

To get the asymptotics as $\alpha \to -\infty$, we use \cite[15.3.7]{Abramowitz1972}
\begin{align*}
{}_2 F_1 (a, b, c; z) &= \frac{\Gamma(c) \Gamma(b-a)}{\Gamma(b)\Gamma(c-a)} (-z)^{-a} {}_2 F_1 (a, a+1-c, a+b-1; z^{-1}) \\
&\phantom{=} + \frac{\Gamma(c) \Gamma(a-b)}{\Gamma(a)\Gamma(c-b)} (-z)^{-b} {}_2 F_1 (b, b+1-c, b+a-1; z^{-1}) . 
\end{align*}
Taking $\alpha = - \epsilon^{-1}$, this gives that as $\epsilon \to 0^+$, 
\[
{}_2 F_1 (a, b, c; -\epsilon^{-1}) \sim \frac{\Gamma(2/3)}{\Gamma(5/6)^2} \epsilon^{1/6}, \quad {}_2 F_1 (a, b, c; 1+\epsilon^{-1}) \sim \frac{\Gamma(2/3)}{\Gamma(5/6)^2} (-\epsilon)^{1/6},
\]
and so $\tau(-\infty) = e^{2 \pi i /3} = \frac{-1}{2} + \frac{i\sqrt{3}}{2}$. 
To get the remaining asymptotics of (iv), as $\alpha \to 0^-$ we write $\alpha = -\epsilon$. Then 
\[
{}_2 F_1 (a, b, c; -\epsilon) \sim 1, \quad {}_2 F_1 (a, b, c; 1+\epsilon) \sim \frac{-\Gamma(1)}{\Gamma(1/6)\Gamma(5/6)}\log(-\epsilon) = \frac{-1}{2\pi}(i \pi  + \log \epsilon),
\]
and so $\tau(0^-) = \frac{1}{2} + i \infty$.

To get the asymptotics as $\alpha \to 1^+$ we recognise that $\tau(1-\alpha) = -1/\tau(\alpha)$ and so $-1/\tau(1^+) = \frac{1}{2} + i \infty$. 
Finally to get the asymptotics as $\alpha \to \infty$ we do the same, so 
$-1/\tau(\infty) = \frac{1}{2} + \frac{i\sqrt{3}}{2}$.

\subsection{Check of Consistency}\label{consistency}
 We see that $\alpha_2$ must be the same sign as $\alpha$ to get $\beta \in \mathbb{R}$. Moreover, as $g_2(1,\tau)>0$ because $\alpha_2$ is real, we can check that
 \begin{align*}
 \frac{1}{2} \psquare{\frac{1}{27} \alpha_2^3 - \frac{1}{8} g_3(1, \tau)} &= \frac{1}{2} \psquare{ \frac{\operatorname{sgn}(\alpha_2)}{27} \pround{3 g_2(1,\tau)/4}^{3/2} - \frac{1}{8} g_3(1, \tau)}, \\
 &= \frac{\operatorname{sgn}(\alpha_2) g_2(1,\tau)^{3/2}}{16 \sqrt{27}} \psquare{ 1 - \frac{\operatorname{sgn}(g_3(1, \tau))}{\operatorname{sgn}(\alpha_2)}\sqrt{\frac{27 g_3^2}{g_2^3}}}, \\
 &= \frac{\operatorname{sgn}(\alpha_2)\pround{4 \alpha_2^2/3}^{3/2}}{16 \sqrt{27}} \psquare{ 1 - \frac{\operatorname{sgn}(g_3(1, \tau))}{\operatorname{sgn}(\alpha_2)}\pround{1 - \frac{1728}{j}}^{1/2}} , \\
 &= \frac{\alpha_2^3}{2 \times 27} \psquare{ 1 - \frac{\operatorname{sgn}(g_3(1, \tau))}{\operatorname{sgn}(\alpha_2)}\pround{1 - 4 \alpha(1-\alpha)}^{1/2}}, \\
 &= \frac{\alpha_2^3}{2 \times 27} \psquare{ 1 - \frac{\operatorname{sgn}(g_3(1, \tau))}{\operatorname{sgn}(\alpha_2)\operatorname{sgn}(1-2\alpha)}(1-2\alpha)}, \\
 &= \frac{\alpha \alpha_2^3}{27} \quad \text{ if } \quad \operatorname{sgn}(g_3(1, \tau)) = \operatorname{sgn}(\alpha_2)\operatorname{sgn}(1-2\alpha).
 \end{align*}
 Hence the two equations are consistent, provided the stated sign condition holds, or if $\alpha = 0$. 
 
\subsection{The Theta Integration} \label{sec: thetaint} We note that $|f_j(s)|^2=\wp(s)-c_j:=\wp(s)-\wp(v_j)$ doesn't fix the sign of $v_j$ for $\wp(\pm v_j)=c_j$. We fix the sign as follows. First observe that 
$$(\tilde{G}^\prime(s))^2=({G}^\prime(s))^2=
\frac{1}{4}c^2 + 4[\wp(s)-\wp(v_1)][\wp(s)-\wp(v_2)][\wp(s)-\wp(v_3)],$$
and so $\wp^{\prime\,2}(v_i)=c^2/4$; we fix the sign so that
$\wp^{\prime}(v_i)=c/2=i\tilde{c}/2$. Further consider the elliptic function
$\wp^{\prime}(s)-c/2$ with three zeros (at $s\in\{v_1,v_2,v_3\}$) and three poles (at $s=0$). Then with the base of the Abel-Jacobi map at $s=0$ (as is standard) we have that
$\sum_i v_i$ is a lattice point. Also observe that
$$\zeta(v_i)+\zeta(v_j)=\zeta(v_i+v_j).$$

We find from
(\ref{inttheta},\ref{intwp}) that
\begin{equation}
\label{thetaint}   
\theta_i(s)
:=
\theta_i(1)+
i \left[ s \zeta(v_j) +\frac12\ln\frac{\sigma(s-v_j)\sigma(1+v_j)}{\sigma(s+v_j)\sigma(1-v_j)} \right],
\end{equation}
where $\theta_i(1)$ is a constant of integration and chosen as described in the text.
Then $\theta_i(-s)-\theta_i(1)=-\left[\theta_i(s)-\theta_i(1)\right]$ is anti-symmetric as required. 
Using the Legendre relation we find that $\sin(\theta_1+\theta_2+\theta_3)$ is periodic in $s$ as
required for consistency.

\subsection{Restrictions on Elliptic Function Parameters}
\label{sec: restrictions}

Here we prove Theorem \ref{thm: V4 monopole spectral curves}. 
Given the discriminant $\Delta$ for the cubic defining $\wp$ \cite[\S 18.9]{Abramowitz1972} gives equations for the lattice invariants in terms of complete elliptic functions.
With our earlier definitions, $g_2 = 12 \pround{K(m)^2/3}^2 q_1(m)$ and $g_3 = 4\pround{{K(m)^2}/{3}}^3(2m-1)q_2(m)$ then with $\sgn=\mathop{\mathrm{sgn}}(\Delta)$ we have for $\Delta > 0$
\[
g_2 = 12\pround{\frac{K^2}{3 \omega_1}}^2\pround{1 - m + m^2}, \quad g_3 = 4\pround{\frac{K^2}{3 \omega_1}}^3 (m-2)(2m-1)(m+1);
\]
whereas for $\Delta < 0$
\[
g_2 = 12\pround{\frac{K^2}{3 \omega_2}}^2\pround{1 - 16m + 16m^2}, \quad g_3 = 8\pround{\frac{K^2}{3 \omega_2}}^3 (2m-1) (32m^2 - 32m - 1).
\]
Here $m=k^2 \in (0,1)$ is the argument of $K$, the underlying lattice has periods $2 \omega$, $2\omega^\prime$, $\omega_1 = \omega$ and $\omega_2 = \omega + \omega^\prime$. Fixing 2 as a period of the lattice, and that the lattice is real, sets $\omega_1 = 1$ for  $\Delta >0$ and $\omega_2=1$ for $\Delta <0$. Observe that for $\sgn(\Delta)=\pm1$ that $g_2(m)$ takes its minimum value at $m=1/2$ while for $m\in(0,1/2)$ we have $g_3(m)>0$.

Our elliptic curve gave the equations $a^2 + \frac{b^2}{12} = g_2$, $\frac{b(b^2 - 36a^2)}{216} + \frac{1}{4}\tilde{c}^2 = g_3$. These equations are underdetermined, but we may substitute for $a^2$ and take $\alpha = {-27 \tilde{c}^2}/{b^3}$ to find 
\begin{equation}\label{cubicalpha}
(4-2\alpha) \tilde{b}^3 - g_2 \tilde{b} - g_3 = 0,
\end{equation}
where $\tilde{b} = \frac{b}{6}$. The discriminant of this cubic is 
$$
\Delta_\alpha(m) = 4(4-2\alpha) g_2^3 - 27(4-2\alpha)^2 g_3^2 = 4(4-2\alpha)\psquare{g_2^3 -27(1-\alpha/2)g_3^2}.
$$
Note $\Delta_0 = \Delta$. 
For a given generic value of $\alpha$ in some region we may solve (\ref{cubicalpha}) , determining $b$, $\tilde{c}$ and $a$ in turn. 

In order to get Nahm data, we require that $b$, $\tilde{c}$, and $a$ are real. We know that this cubic has real coefficients, and so there will always be a real root of the cubic. To get reality of $\tilde{c}$, we need that this real root $\tilde{b}$ satisfies $\operatorname{sgn}(\tilde{b}) = -\operatorname{sgn}(\alpha)$, and for reality of $a$ we need $\abs{\tilde{b}} \leq g_2/3$. Necessary conditions to find such solutions are as follows. 

First consider $\Delta > 0$. Then
$g_2 > 0$ and $g_3$ is monotonically decreasing for $m\in(0,1)$ with $\operatorname{sgn}(g_3) = -\operatorname{sgn}(m - 1/2)$. We have the following properties:
\begin{enumerate}[(i)]
    \item If $\alpha > 2$ the discriminant $\Delta_\alpha(m)<0$. Then (\ref{cubicalpha}) has one real root whose sign is opposite that of  $g_3$. 
    Now $\operatorname{sgn}(\tilde{b}) = -\operatorname{sgn}(\alpha)<0$ is opposite that of $g_3$; hence we require
    $g_3>0$ and so $m\in(0,1/2)$.
    \item For $\alpha \in(0,2)$ the discriminant $\Delta_\alpha(m)>0$ upon comparison with
    $\Delta=g_2^3 - 27 g_3^2 > 0$.
    Then, because the sum of the roots is zero, they cannot all be the same sign.
    \item When $\alpha < 0$, 
    from the derivative of the cubic we know it will have a local maxima and minima at $\tilde{b} = \pm\sqrt{\frac{g_2}{3(4-2\alpha)}}$; it is the minima when the sign is positive. Recalling that we require a root with sign $\sgn \tilde{b} = -\sgn\alpha = 1$, the local minima must be non-positive, and the value at this $\tilde{b}$ is $-\frac{2}{3}\tilde{b}g_2 - g_3$. As the value at this minima is monotonically increasing for $m>1/2$, and negative at $m=1/2$, then the value at the minima is negative for all $m < m_\ast$, the value for which is it zero. Solving, one gets the condition $\Delta_\alpha(m_\ast)=0$, taking the root greater than $1/2$.   
    
\end{enumerate}
Therefore necessary conditions for a real root of the right sign to exist for
$\Delta>0$ are that
\begin{itemize}
    \item if $\alpha > 2$, $m  < 1/2$,
    \item if $\alpha \in (0, 2)$, any $m$ is valid
    \item if $\alpha < 0$, $m <m_2(\alpha)$, where $m_2$ is the root of $\Delta_\alpha(m)=0$ in $(1/2,1)$.
\end{itemize}
To get Nahm data we require that this real root is bounded in magnitude by $\sqrt{g_2/3}$, with the case that it is equal corresponding to $a=0$, i.e. to the $D_4$ monopoles of \cite{Houghton1996c}. Figures showing these parameter regions are given in Figure \ref{fig: valid parameter regions}. 


In the case $\Delta < 0$, in order to get real roots of the right sign one analogously gets restrictions on $m$ relative to $\alpha$ such that
\begin{itemize}
    \item if $\alpha < 0$, $m < 1/2$, 
    \item if $\alpha \in (0,2)$, $m > 1/2$ or $m < m_1(\alpha)$, defined to be the root $<1/2$ of the polynomial $\Delta_\alpha(m)=0$. 
    \item if $\alpha > 2$, $m < m_2(\alpha)$, now defined to be the root $>1/2$ of the polynomial $\Delta_\alpha(m)=0$.
\end{itemize}
Fixing the size of the root in this case requires more work, complicated by the fact that $g_2$ is real only if $\abs{m-1/2} > \sqrt{3}/4$. Using explicitly formulas for the roots $\tilde{b}$ from Cardano's formula one can achieve explicit bounds, but here we omit these. In practice, when using this approach to plot monopoles, numerical methods can be used to find the appropriate $m$ region for a given $\alpha$, as done to generate Figure \ref{fig: valid parameter regions}. 

\begin{figure}
    \centering
     \begin{subfigure}[c]{0.49\textwidth}
         \centering
         \includegraphics[width=\textwidth]{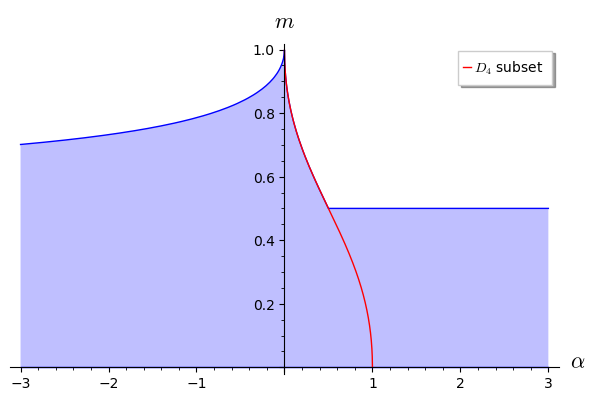}
         \caption{$\Delta > 0$}
         \label{fig: vpr+1}
     \end{subfigure}
     \begin{subfigure}[c]{0.49\textwidth}
         \centering
         \includegraphics[width=\textwidth]{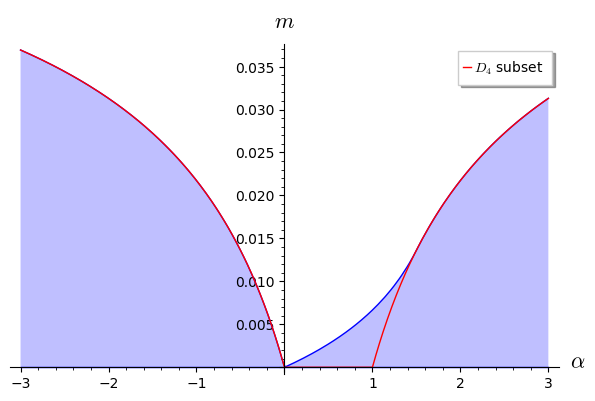}
         \caption{$\Delta < 0$}
         \label{fig: vpr-1}
     \end{subfigure}
     \caption{Valid parameter regions for $V_4$ monopoles, with the subset corresponding to $D_4$ monopoles highlighted}
     \label{fig: valid parameter regions}
\end{figure}

Note that, for certain admissible $\alpha, m$ there may be two possible monopoles because two roots of the cubic defining $b$ satisfy the required conditions. Numerical investigations indicates that this phenomenon only occurs for $\Delta > 0 $. We plotted two examples of this, seen in Figure \ref{fig: plots of monopoles for fixed params} to investigate the difference in the associated monopoles. These regions stitch together so as to make the moduli space connected. 

\begin{figure}
    \centering
     \begin{subfigure}[c]{0.49\textwidth}
         \centering
         \includegraphics[width=\textwidth]{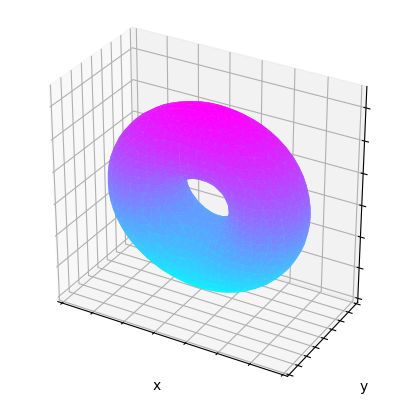}
         \caption{$k=0.45$, $\alpha = 0.2$, $\Delta > 0$, $b = -3.21$}
     \end{subfigure}
     \begin{subfigure}[c]{0.49\textwidth}
         \centering
         \includegraphics[width=\textwidth]{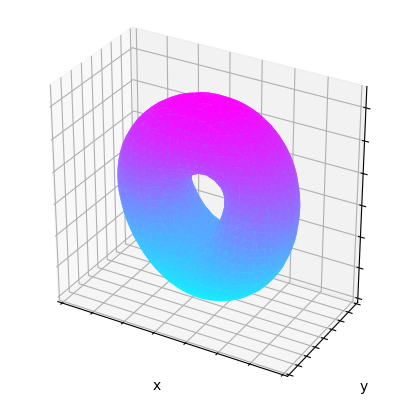}
         \caption{$k=0.45$, $\alpha = 0.2$, $\Delta > 0$, $b = -7.19$}
     \end{subfigure}
     \begin{subfigure}[c]{0.49\textwidth}
         \centering
         \includegraphics[width=\textwidth]{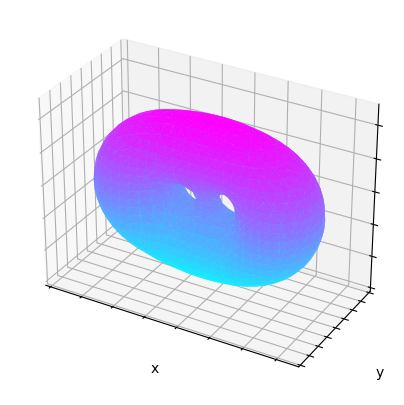}
         \caption{$k=0.77$, $\alpha = -2.0$, $\Delta > 0$, $b=1.42$}
     \end{subfigure}
     \begin{subfigure}[c]{0.49\textwidth}
         \centering
         \includegraphics[width=\textwidth]{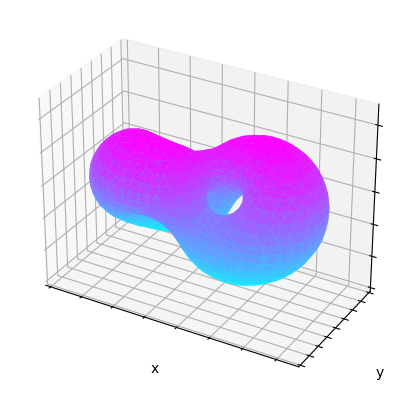}
         \caption{$k=0.77$, $\alpha = -2.0$, $\Delta > 0$, $b=7.24$}
     \end{subfigure}
     \caption{Comparison of two pairs of monopoles with equal values of $\alpha, m$. taking $\mathcal{E}=0.17$}
     \label{fig: plots of monopoles for fixed params}
\end{figure}

\section{Initial Computation of \secmath{D_6} Nahm Matrices}\label{sec: initial computation of nahm matrices}

Here we will take the procedure introduced in \cite{Hitchin1995}
and developed in \cite{Houghton1996,Houghton1996b,Houghton1996c}
and apply this to the $C_k$ symmetric monopole. In brief, this procedure uses representation theory to construct Nahm matrices with a given symmetry group $G \leq \operatorname{SO}(3)$. This is done by establishing an isomorphism between the $\operatorname{SO}(3)$-representation space $\mathbb{R}^3 \otimes \mathfrak{su}(k)$ containing the matrix triples and a $\operatorname{SO}(3)$-representation space described by homogeneous bivariate polynomials acted on naturally via the isomorphism $\operatorname{SO}(3) \cong \operatorname{PSU}(2)$. On the latter the action of the symmetry group is well understood, and so one can take a polynomial $Q$ invariant under $G$ and return via the isomorphism a triple of matrices $(S_i)$ also invariant under the symmetry. The number of possible matrix triples one can get from the input polynomial is determined by the representation theory, and there is always a triple $(\rho_i)$ invariant under the action of $\operatorname{SO}(3)$ corresponding to the polynomial 1. The matrices returned by this isomorphism need not be anti-Hermitian, but they can be fixed to be so. 

We describe the procedure for general $k$ before applying this in the $k=3$ context. The steps of the procedure are:
\begin{enumerate}
    \item Take the input polynomials to be $Q_l = \zeta_0^l \zeta_1^l, i=l, \dots, k$ and $Q_{k+1} = \zeta_0^{2k} - \zeta_1^{2k}$. 
    \item Construct the invariant matrix triples $(\rho_i), (S_i^{(j)})$ corresponding to the polynomials 1, $Q_l$ respectively, and scale them so they are all anti-Hermitian. The degree of the input polynomial $d$ determines how many direct summands of the $(d+1)$-dimensional irreducible representation space of $\operatorname{SO}(3)$ there are in $\mathbb{R}^3 \otimes \mathfrak{gl}(k)$ when decomposed into irreducibles. There are $3k-2$ invariant $S$-vectors and so variables $y_j$ associated with them; together with $(\rho_i)$ and the associated variable $x$ we have $3k-1$ variables.

    \item Set now $T_i = x \rho_i + \sum_j S_i^{(j)}$, where $x, y_j$ are real functions and so $T_i$ is anti-Hermitian.
    \item Diagonalise the matrix $T_3$ with a unitary matrix $U$ whose columns are the normalised eigenvectors of $T_3$, that is construct $U^{-1} T_3 U$. As $T_3$ is anti-hermitian and linear in the invariant vectors, the diagonal entries which are the eigenvalues will be pure-imaginary, linear in $\pbrace{x, y_j}$. 
    \item Now from \cite{Braden2011}, conjugating by the same unitary matrix will give 
    \[
    U^{-1} (T_1 + i T_2)U = \sum_{j=1}^k \alpha_j E_{j, j+1}, \quad  U^{-1} (T_1 - i T_2)U = \sum_{j=1}^k -\bar{\alpha}_j E_{j+1, j},
    \]
    for some $\alpha_1, \dots \alpha_n \in \mathbb{C}$.
    \item Writing $\alpha_j = r_j e^{i\phi_j}$ (as generically $\alpha_j \neq 0$) and solving 
    \begin{align*}
    \phi_j + \theta_{j+1} - \theta_j &= 0 , j=1, \dots k-1, \quad
    \sum_j \theta_j = 0,
    \end{align*}
    for $\theta_1, \dots, \theta_k \in \mathbb{R}$, then conjugating by the unitary matrix $D = \operatorname{diag}(e^{i\theta_1}, \dots, e^{i \theta_k})$ preserves the form of $T_3$, but acts to make each $\alpha_j$ real for $j=1, \dots, k-1$ (as it multiplies $\alpha_j$ by $e^{i(\theta_{j+1}-\theta_j)}$). 
    The effect on $\alpha_k$ is to multiply this by $e^{i(\theta_1 - \theta_k)} = e^{i(\phi_1 + \dots + \phi_{k-1})} = \prod_{j=1}^{k-1} (\alpha_j/r_j)$. After quotienting by this action the number of independent variables we have is $3k-2 + 1 - (k-1) = 2k$.

\end{enumerate}

We now apply the algorithm in the case $k=3$. This yields Nahm matrices 
\begin{align*}
    T_1 &= \left(\begin{array}{rrr}
0 & i y_{0} & i y_{1} + y_{5} + i y_{6} \\
i y_{0} & 0 & -2 x - y_{2} + i y_{3} \\
i y_{1} - y_{5} + i y_{6} & 2 x + y_{2} + i y_{3} & 0
\end{array}\right), \\
T_2 &= \left(\begin{array}{rrr}
i y_{0} & 0 & 2 x + y_{2} - i y_{3} \\
0 & -i y_{0} & i y_{1} + y_{5} + i y_{6} \\
-2 x - y_{2} - i y_{3} & i y_{1} - y_{5} + i y_{6} & 0
\end{array}\right), \\
T_3 &= \left(\begin{array}{rrr}
i y_{1} + i y_{4} - \frac{2}{3} i y_{6} & -2 x + 2 y_{2} & 0 \\
2 x - 2 y_{2} & i y_{1} + i y_{4} - \frac{2}{3} i y_{6} & 0 \\
0 & 0 & -2 i y_{1} + i y_{4} + \frac{4}{3} i y_{6}
\end{array}\right) . 
\end{align*}
with accompanying ODEs in $8$ real variables 
\begin{align*}
x^\prime &= 2 x^{2} - \frac{1}{3} y_{0}^{2} - \frac{5}{6} y_{1}^{2} - \frac{1}{2} y_{2}^{2} + \frac{1}{6} y_{3}^{2} + \frac{1}{6} y_{5}^{2} + \frac{5}{6} y_{6}^{2}, \\
y_0^\prime &= -4 x y_{0} + 4 y_{0} y_{2}, \\
y_1^\prime &= -4 x y_{1} - \frac{16}{5} y_{1} y_{2} - \frac{6}{5} y_{3} y_{5} - \frac{6}{5} y_{2} y_{6}, \\
y_2^\prime &=  \frac{2}{3} y_{0}^{2} - \frac{4}{3} y_{1}^{2} - 2 x y_{2} - y_{2}^{2} - \frac{1}{3} y_{3}^{2} - \frac{1}{3} y_{5}^{2} - y_{1} y_{6} + \frac{1}{3} y_{6}^{2}, \\
y_3^\prime &= 2 x y_{3} - 2 y_{2} y_{3} - 3 y_{1} y_{5} + 2 y_{5} y_{6}, \\
y_4^\prime &= 0,\\
y_5^\prime &= -3 y_{1} y_{3} + 2 x y_{5} - 2 y_{2} y_{5} + 2 y_{3} y_{6}, \\
y_6^\prime &= -\frac{9}{5} y_{1} y_{2} + \frac{6}{5} y_{3} y_{5} + 6 x y_{6} + \frac{6}{5} y_{2} y_{6}.
\end{align*}
The associated spectral curve is 
\[
\eta^3 + \alpha_1 \eta^2 \zeta + \alpha_2 \eta \zeta^2 + \alpha_3 \zeta^3 + \beta \zeta^6 - \bar{\beta}=0
\]
where 
\begin{align*}
\alpha_1 &= -6 y_{4}, \\
\alpha_2 &= 4 y_{0}^{2} - 8 y_{1}^{2} + 48 x y_{2} - 12 y_{2}^{2} + 4 y_{3}^{2} + 12 y_{4}^{2} + 4 y_{5}^{2} + 24 y_{1} y_{6}  - \frac{4}{3} y_{6}^{2}, \\
\alpha_3 &= -160 x^{2} y_{1} + 16 y_{0}^{2} y_{1} + 8 y_{1}^{3} + 128 x y_{1} y_{2} - 40 y_{1} y_{2}^{2} - 8 y_{1} y_{3}^{2} - 8 y_{0}^{2} y_{4} \\
&\phantom{=}  + 16 y_{1}^{2} y_{4} - 96 x y_{2} y_{4} + 24 y_{2}^{2} y_{4} - 8 y_{3}^{2} y_{4} - 8 y_{4}^{3} - 32 x y_{3} y_{5} + 32 y_{2} y_{3} y_{5}  \\
&\phantom{=} - 8 y_{1} y_{5}^{2} - 8 y_{4} y_{5}^{2} - \frac{32}{3} y_{0}^{2} y_{6} - \frac{128}{3} y_{1}^{2} y_{6} - 32 x y_{2} y_{6} + 80 y_{2}^{2} y_{6} + \frac{16}{3} y_{3}^{2} y_{6} \\
&\phantom{=} - 48 y_{1} y_{4} y_{6} + \frac{16}{3} y_{5}^{2} y_{6} + 24 y_{1} y_{6}^{2} + \frac{8}{3} y_{4} y_{6}^{2} + \frac{16}{27} y_{6}^{3}, \\
\beta &= -16 x^{2} y_{0} + 4 y_{0} y_{1}^{2} - 16 x y_{0} y_{2} - 4 y_{0} y_{2}^{2} + 8 j y_{0} y_{1} y_{3} - 4 y_{0} y_{3}^{2} - 16 i x y_{0} y_{5}  \\
&\phantom{=} - 8 i y_{0} y_{2} y_{5} + 4 y_{0} y_{5}^{2} + 8 y_{0} y_{1} y_{6} + 8 i y_{0} y_{3} y_{6} + 4 y_{0} y_{6}^{2}.
\end{align*}
In order to make the variables real we have imposed the anti-Hermiticity condition required of the Nahm matrices at the beginning, by making the invariant vectors corresponding to each variable anti-Hermitian.

We may consistently set $y_3 = 0 = y_5 $, which we may view as using the conjugation action of diagonal matrices $\operatorname{diag}(e^{i\theta_1}, e^{i\theta_2}, e^{i\theta_3})$, $\theta_1 + \theta_2 + \theta_3 = 0$. This leaves us with the $2 \times 3 = 6$ real variables we would expect to have from the corresponding Toda. Note that because $\alpha_1^\prime=0$, the centre of mass of the Toda system is already fixed. Moreover, we may centre to consistently set $y_4=0$, and so we now have the equations in the remaining 5 variables as 
\begin{align*}
x^\prime &= 2 x^{2} - \frac{1}{3} y_{0}^{2} - \frac{5}{6} y_{1}^{2} - \frac{1}{2} y_{2}^{2} + \frac{5}{6} y_{6}^{2}, \\
y_0^\prime &= -4 x y_{0} + 4 y_{0} y_{2}, \\
y_1^\prime &= -4 x y_{1} - \frac{16}{5} y_{1} y_{2} - \frac{6}{5} y_{2} y_{6}, \\
y_2^\prime &=  \frac{2}{3} y_{0}^{2} - \frac{4}{3} y_{1}^{2} - 2 x y_{2} - y_{2}^{2} - y_{1} y_{6} + \frac{1}{3} y_{6}^{2}, \\
y_6^\prime &= -\frac{9}{5} y_{1} y_{2} + 6 x y_{6} + \frac{6}{5} y_{2} y_{6},
\end{align*}
with conserved quantities
\begin{align*}
\alpha_2 &= 4 y_{0}^{2} - 8 y_{1}^{2} + 48 x y_{2} - 12 y_{2}^{2} + 24 y_{1} y_{6} - \frac{4}{3} y_{6}^{2}, \\
\alpha_3 &= -160 x^{2} y_{1} + 16 y_{0}^{2} y_{1} + 8 y_{1}^{3} + 128 x y_{1} y_{2} - 40 y_{1} y_{2}^{2}   \\
&\phantom{=}  - \frac{32}{3} y_{0}^{2} y_{6} - \frac{128}{3} y_{1}^{2} y_{6} - 32 x y_{2} y_{6} + 80 y_{2}^{2} y_{6}  \\
&\phantom{=} + 24 y_{1} y_{6}^{2}  + \frac{16}{27} y_{6}^{3}, \\
\beta &= -16 x^{2} y_{0} + 4 y_{0} y_{1}^{2} - 16 x y_{0} y_{2} - 4 y_{0} y_{2}^{2}  \\
&\phantom{=}  + 8 y_{0} y_{1} y_{6}  + 4 y_{0} y_{6}^{2}.
\end{align*}

At this stage the resulting ODEs are somewhat opaque and we may  use the connection to Toda to clarify. Following the steps of the procedure from \cite{Braden2011} outlined earlier we may put the Nahm Lax pair in Toda form, namely with 
\begin{align*}
    T_1 + i T_2 &= \left(\begin{array}{rrr}
0 & -2 \sqrt{2} x - \sqrt{2} y_{1} - \sqrt{2} y_{2} - \sqrt{2} y_{6} & 0 \\
0 & 0 & 2 \sqrt{2} x - \sqrt{2} y_{1} + \sqrt{2} y_{2} - \sqrt{2} y_{6} \\
2 y_{0} & 0 & 0
\end{array}\right), \\
T_1 - i T_2 &= \left(\begin{array}{rrr}
0 & 0 & -2 y_{0} \\
2 \sqrt{2} x + \sqrt{2} y_{1} + \sqrt{2} y_{2} + \sqrt{2} y_{6} & 0 & 0 \\
0 & -2 \sqrt{2} x + \sqrt{2} y_{1} -\sqrt{2} y_{2} + \sqrt{2} y_{6} & 0
\end{array}\right), \\
-2iT_3 &= \left(\begin{array}{rrr}
-4 x + 2 y_{1} + 4 y_{2} -\frac{4}{3} y_{6} & 0 & 0 \\
0 & -4 y_{1} + \frac{8}{3} y_{6} & 0 \\
0 & 0 & 4 x + 2 y_{1} -4 y_{2} -\frac{4}{3} y_{6}
\end{array}\right).
\end{align*}
This gives us variables 
\[
\begin{split}
    a_0 &= 2 y_0 , \quad
    a_1 = -2 \sqrt{2} x - \sqrt{2} y_{1} - \sqrt{2} y_{2} - \sqrt{2} y_{6}, \quad
    a_2 = 2 \sqrt{2} x - \sqrt{2} y_{1} + \sqrt{2} y_{2} - \sqrt{2} y_{6}, \\
    b_1 &= 4 x - 2 y_{1} - 4 y_{2} + \frac{4}{3} y_{6}, \quad
    b_2 = 4 y_{1} - \frac{8}{3} y_{6}, \quad
    b_3 = -4 x - 2 y_{1} + 4 y_{2} + \frac{4}{3} y_{6}.
\end{split}
\]
These variables are the Flaschka coordinates for the periodic Toda system. (Any 6-tuple satisfying $\sum_i b_i = 0$ gives valid $x, y_j$.) In these new variables we have 
\begin{equation}
\begin{split}
    a_0^\prime =& \frac{1}{2}a_0 ( b_3 - b_1),\quad
    a_1^\prime = \frac{1}{2} a_1  (b_1 - b_2), \quad
    a_2^\prime = \frac{1}{2} a_2 ( b_2 - b_3),\\
    b_1^\prime =& a_1^2 - a_0^2,\quad
    b_2^\prime = a_2^2 - a_1^2 ,\quad
    b_3^\prime = a_0^2 - a_2^2 ,\quad
\end{split}
\end{equation}
together with the constants
\begin{align*}
    \alpha_2 &= b_{1} b_{2} + b_{1} b_{3} + b_{2} b_{3} + a_{0}^{2} + a_{1}^{2} + a_{2}^{2}, \quad
    \alpha_3 = b_{1} b_{2} b_{3} + b_{1} a_{2}^{2} + b_{2} a_{0}^{2} + b_{3} a_{1}^{2}, \quad
    \beta = a_0 a_1 a_2.
\end{align*}

At this stage we have $6$ variables and 3 constraints.
One could in principle solve these explicitly using the fact that the flow linearises on the Jacobian of the associated hyperelliptic curve as in \cite[Theorem 5.1]{Moerbeke1976}.
Are simplifications possible?  
We may use Gr\"obner bases in Sage to utilise the constants $\alpha_2,\alpha_3, 0 = \sum b_i$ to eliminate the $b_i$, and we get the equations described in the text,
\begin{align*}
0 &= \sum_{i=0}^2 a_i^2 - \alpha_2 - \frac{1}{3}(d_1^2 + d_1 d_2 + d_2^2), \\
0 &= a_1^2 d_2 - a_2^2 d_1 + \alpha_3 + \frac{1}{3}\alpha_2(d_1 - d_2) + \frac{1}{27}(d_1 - d_2)^3 ,
\end{align*}
where we have introduced $d_i = \frac{2a_i^\prime}{a_i}$. This in principle is the maximal reduction one can achieve with the variables provided when the $\alpha_i$ and $\beta$ are generic.

One simplification which can be achieved is by attempting to make the second equation a polynomial in $d_1 - d_2$. To do this we would need $a_1^2 = a_2^2$. We can calculate that 
\begin{align*}
\frac{d}{ds}(a_1^2 - a_2^2) &= 2\psquare{a_1 \pround{\frac{1}{2} a_1  (b_1 - b_2)} - a_2 \pround{\frac{1}{2} a_2 ( b_2 - b_3)}} , \\
&= a_1^2 (b_1 - b_2) - a_2^2 (b_2 - b_3), \\
&= a_1^2 (b_1 - 2b_2 + b_3) + (a_1^2 - a_2^2)(b_2 - b_3) , \\
&= -3 b_2 a_1^2 + (a_1^2 - a_2^2)(b_2 - b_3). 
\end{align*}
Hence we can consistently set $a_1^2 - a_2^2 = 0$ provided $b_2 a_1^2 = 0$. As $b_2^\prime = a_2^2 - a_1^2$, this means we can consistently set $a_1^2 = a_2^2$ and $b_2 = 0$. Making these restrictions we can now eliminate the one remaining equation to find 
\[
0 = a_0^2 + 2 a_1^2 - \alpha_2 - d_1^2
\Rightarrow a_1^2 \pround{2\frac{da_1}{ds}}^2 = \beta^2 + 2 a_1^6 - \alpha_2 a_1^4 . 
\]

\section{Initial Computation of \secmath{V_4} Nahm Matrices}
\label{sec: v4 nahm matrices}
Taking the polynomials $\zeta_0 \zeta_1 (\zeta_0^4-\zeta_1^4)$, $\zeta_0^2 \zeta_1^2$, and $\zeta_0^4 +  \zeta_1^4$ as the inputs to the procedure of outlined in Appendix \ref{sec: initial computation of nahm matrices} gives the ODES in the six real-valued variables
\begin{align*}
    x^\prime &= 2 x^{2} - \frac{1}{6} y_{0}^{2} + \frac{1}{2} y_{1}^{2} - \frac{1}{2} y_{2}^{2} + \frac{1}{6} y_{3}^{2} - \frac{1}{2} y_{4}^{2},
    &y_2^\prime &= \frac{1}{3} y_{0}^{2} + y_{1}^{2} - 2 x y_{2} - y_{2}^{2} - \frac{1}{3} y_{3}^{2} - y_{1} y_{4} , \\
    y_0^\prime &= -2 x y_{0} + 2 y_{0} y_{2} + 2 y_{1} y_{3} + y_{3} y_{4}, 
    &y_3^\prime &= 2 y_{0} y_{1} + 2 x y_{3} - 2 y_{2} y_{3} + y_{0} y_{4},\\
    y_1^\prime &= 2 x y_{1} + 2 y_{1} y_{2} + \frac{2}{3} y_{0} y_{3} - y_{2} y_{4},
    &y_4^\prime &= -2 y_{1} y_{2} + \frac{2}{3} y_{0} y_{3} - 4 x y_{4},
\end{align*}
with the corresponding spectral curve 
\[
\mathcal{C}: \quad \eta^3 + \eta \psquare{a(\zeta^4 + 1) + b \zeta^2} + c\zeta (\zeta^4 - 1) = 0,
\]
where 
\begin{align*}
    a &= 8 x y_{0} + 4 y_{0} y_{2} - 4 y_{1} y_{3} + 4 y_{3} y_{4}, \\
    b &= 4 y_{0}^{2} - 12 y_{1}^{2} + 48 x y_{2} - 12 y_{2}^{2} + 4 y_{3}^{2} - 24 y_{1} y_{4}, \\
    c &= -8 i y_{0}^{2} y_{1} - 8 i y_{1}^{3} + 48 i x y_{1} y_{2} + 24 i y_{1} y_{2}^{2} - 16 i x y_{0} y_{3} + 16 i y_{0} y_{2} y_{3} \\
    &\phantom{=} + 8 i y_{1} y_{3}^{2} + 48 i x^{2} y_{4} - 4 i y_{0}^{2} y_{4} + 12 i y_{1}^{2} y_{4} - 12 i y_{2}^{2} y_{4} + 4 i y_{3}^{2} y_{4} - 4 i y_{4}^{3} . 
\end{align*}
The full Nahm matrices are 
\begin{align*}
    T_1 &= \begin{pmatrix}
        0 & 0 & 0 \\ 0 & 0 & -\bar{f_1} \\ 0 & f_1 & 0 
    \end{pmatrix}, &
    T_2 &= \begin{pmatrix}
        0 & 0 & f_2 \\ 0 & 0 & 0 \\ -\bar{f_2} & 0 & 0 
    \end{pmatrix}, &
    T_3 &= \begin{pmatrix}
        0 & -\bar{f_3} & 0 \\ f_3 & 0 & 0 \\ 0 & 0 & 0
    \end{pmatrix}.
\end{align*}
where the $f_i$ are given by 
\begin{align*}
f_1 &= 2x + y_0 - iy_1 + y_2 + iy_3 + iy_4, \\
f_2 &= 2x - y_0 - iy_1 + y_2 - iy_3 + iy_4, \\
f_3 &= 2x + 2iy_1 - 2y_2 + iy_4,
\end{align*}
One can check that setting $y_1 = y_3 = y_4 = 0$ is consistent, and corresponds to the inversion symmetric case. Note the condition on the residues of the Nahm data now
become that  the residue of each $f_i$ at the poles is $1$.

\providecommand{\bysame}{\leavevmode\hbox
to3em{\hrulefill}\thinspace}
\bibliographystyle{plain}
\bibliography{jabref_library.bib}

\end{document}